\newenvironment*{linenomath}{}{}
\newtheorem{thm}{Theorem}
\theoremstyle{remark}
\newtheorem{remark}{Remark}
\definecolor{forestgreen}{rgb}{0.33,0.61,0.34}
\newcommand{\add}[1]{\textcolor{black}{#1}}
\newcommand{\del}[1]{\textcolor{red}{}}
\begin{document}
	\floatsetup[figure]{style=plain,subcapbesideposition=top}
	\title{Fixation probability in evolutionary dynamics on switching temporal networks}
	\author{Jnanajyoti Bhaumik$^1$ \and Naoki Masuda$^{1,2}$}
	\date{%
		$^1$Department of Mathematics, State University of New York at Buffalo, NY 14260-2900, USA\\%
		$^2$Computational and Data-Enabled Science and Engineering Program,
		State University of New York at Buffalo, Buffalo, NY 14260-5030, USA\\[2ex]%
	}
	
	\maketitle

	\begin{abstract}
	Population structure has been known to substantially affect evolutionary dynamics. Networks that promote the spreading of fitter mutants are called amplifiers of \del{natural} selection, and those that suppress the spreading of fitter mutants are called suppressors \add{of selection}. Research in the past two decades has found various families of amplifiers while suppressors still remain somewhat elusive. It has also been discovered that most networks are amplifiers \add{of selection} under the birth-death updating combined with uniform initialization, which is a standard condition assumed widely in the literature. In the present study, we extend the birth-death processes to temporal (i.e., time-varying) networks. For the sake of tractability, we restrict ourselves to switching temporal networks, in which the network structure \add{deterministically} alternates between two static networks at constant time intervals or stochastically in a Markovian manner. We show that, in a majority of cases, switching networks are less amplifying than both of the two static networks constituting the switching networks. Furthermore, most small switching networks\add{, i.e., networks on six nodes or less,} are suppressors, which contrasts to the case of static networks.
	\end{abstract}
	
	\section{Introduction}
	
	Evolutionary dynamics models enable us to study how populations change over time under natural selection and neutral random drift among other factors.
	Over the past two decades, the population structure, particularly those represented by networks (i.e., graphs), has been shown to significantly alter the spread of mutant types~\cite{lieberman2005evolutionary,nowak_book,nowak2010evolutionary,shakarian2012review,perc2013evolutionary}.
	Mutants may have a fitness that is different from the fitness of a resident type, which makes the mutants either more or less likely to produce offsprings. The fitness of each type may vary depending on the type of the neighboring individuals' types as in the case of evolutionary games on networks. On the other hand, the simplest assumption on the fitness is to assume that the fitness of each type is constant over time. This latter case, which we refer to as constant selection,
has also been studied as biased voter models, modeling stochastic opinion formation in networks (and well-mixed populations)\cite{durrett1999stochastic,antal2006evolutionary,sood2008voter,castellano2009statistical}.
	
Networks on which real-world dynamical processes approximated by evolutionary dynamics occur may be time-varying.
Temporal (i.e., time-varying) networks and dynamical processes on them have been extensively studied~\cite{holme2012temporal,holme2013temporal,holme2015modern,masuda2017introduction,karsai2018bursty,holme2019temporal,lambiotte2021guide}. Evolutionary game dynamics on time-varying networks are no exception.
It has been shown that temporal networks enhance the evolution of cooperation as compared to static networks~\cite{cardillo2014evolutionary,li2020evolution,johnson2021temporal,sheng2021evolutionary,su2023strategy}. It has also been known for a longer time that 
coevolutionary dynamics of a social dilemma game and network structure, in which the dynamics of the network structure depend on the state of the nodes (e.g., cooperator or defector), enhance overall cooperation if players tend to avoid creating or maintaining edges connecting to defectors~\cite{santos2006cooperation,pacheco2006coevolution,fu2009partner,perc2013evolutionary,mcavoy2020social}.
	
In this study, we investigate constant-selection evolutionary dynamics on temporal networks to clarify how the time dependence of the network structure impacts evolutionary processes. In particular, a key question in studies of constant-selection evolutionary dynamics on networks is the fixation probability, defined as the probability that a single mutant type introduced to a node in the network eventually fixates, i.e., occupies all the nodes of the network. The fixation probability depends on the fitness of the mutant type relative to the fitness of the resident type, denoted by $r$.
A network is called an amplifier of \del{natural} selection if it has a higher fixation probability than the complete graph, which corresponds to the Moran process, when $r>1$ and a lower fixation probability when $r<1$; conversely, a network is called a suppressor \add{of selection} if the fixation probability is smaller than for the Moran process on $r>1$ and larger for $r<1$ \cite{lieberman2005evolutionary,adlam2015amplifiers}. In Fig.~\ref{fig:amplifier_moran_suppressor}, we show hypothetical examples of the fixation probability as a function of $r$ for three networks: the complete graph (i.e., Moran process), an amplifier, and a suppressor.
Under the so-called birth-death updating rule and uniform initialization, most static networks are amplifiers \add{of selection} \cite{hindersin2015most,allen2021fixation}. In fact, there is only one suppressing static network with six nodes among the 112 connected six-node networks \cite{alcalde2017suppressors}.

\begin{figure}[t] 
	\begin{center}
		\centering
		{\includegraphics[width=7cm]{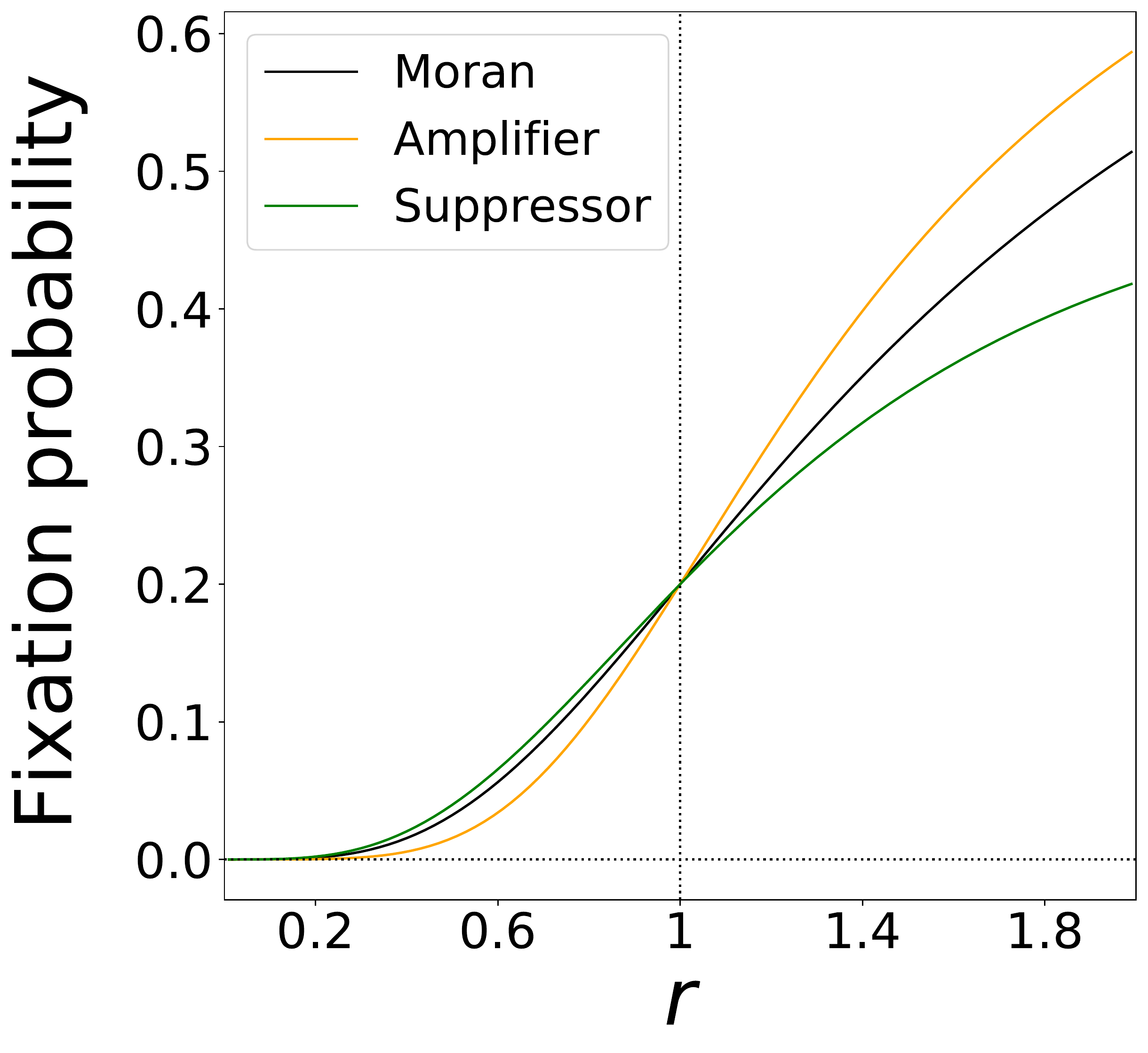} }
		\caption{Concept of amplifier and suppressor of \del{natural} selection. \add{The fitness of the resident and mutant type is equal to $1$ and $r$, respectively.
		A value of $r>1$ indicates that a mutant node is more likely to be selected for reproduction than a resident node. A value of $r<1$ indicates vice versa.}   The fixation probability of a single mutant type for an amplifier is smaller than that for the Moran process when $r<1$ and larger than that for the Moran process when $r>1$. Conversely, the fixation probability for a suppressor is larger than that for the Moran process when $r<1$ and smaller than that for the Moran process when $r>1$. The Moran process, amplifier, and suppressor have the same fixation probability at $r=1$, which is equal to $1/N$. In the figure, the fixation probabilities for the Moran process are given by Eq.~\eqref{eq:fixation-prob-Moran} with $N=5$, and those for
the amplifier and suppressor are hypothetical ones for expository purposes.
%
%
	}
		\label{fig:amplifier_moran_suppressor}%
	\end{center}
\end{figure}

Furthermore, various families of amplifiers \add{of selection} have been found~\cite{giakkoupis2016amplifiers,galanis2017amplifiers,pavlogiannis2017amplification,pavlogiannis2018construction,goldberg2019asymptotically}, whereas suppressors \add{of selection} still remain elusive~\cite{alcalde2017suppressors,alcalde2018evolutionary}.
	On these grounds, we ask the following two main questions in the present study. First, as in the case of static networks, \del{is a vast majority} \add{are many} \del{of} temporal networks \add{comprised of sequences of unweighted networks} amplifier\add{s} of \del{natural} selection under the same condition (i.e., birth-death updating rule and uniform initialization)? Second, if we combine amplifying static networks, $G_1$ and $G_2$, into a temporal network, can the obtained temporal network be a suppressor \add{of selection} or a less amplifying temporal network than both $G_1$ and $G_2$?

	\section{Model}
	
	Let $G$ be a static weighted network with $N$ nodes. We assume undirected networks for simplicity although extending the following evolutionary dynamics to the case of directed networks is straightforward. We assume that each node takes either the resident or mutant type at any discrete time.
	The resident and mutant have fitness $1$ and $r$, respectively. The fitness represents the propensity with which each type is selected for reproduction in each time step. The mutant type initially occupies just one node, which is selected uniformly at random among the $N$ nodes. The other $N-1$ nodes are occupied by the resident type.
	We then run the birth-death process, which is a generalization of the Moran process to networks \cite{lieberman2005evolutionary, ohtsuki2006simple, olfati2007evolutionary, nowak2010evolutionary,shakarian2012review,perc2013evolutionary}. Specifically, in every discrete time step, we select a node $v$ to reproduce with the probability proportional to its fitness value. Next, we select a neighbor of $v$, denoted by $v'$, with the probability proportional to the weight of the undirected edge ($v$, $v'$). Then, the type at $v$ (i.e., either resident or mutant) replaces that at $v'$. We repeat this process until the entire population is of a single type, either resident or mutant, which we call the fixation.
	
	In this study, we extend this birth-death process to temporal networks in which two static networks $G_{1}$ and $G_{2}$, both having $N$ nodes, alternate with constant intervals $\tau$. We call this temporal network model the switching network and denote it by $(G_1, G_2, \tau)$. 
	Switching networks have been used for studying various dynamics on temporal networks including synchronization \cite{porfiri2006random,stilwell2006sufficient,olfati2007evolutionary,Naoki_klemm_eguiluz_2013,hasler2013dynamicspartone,hasler2013dynamicsparttwo,masuda2016accelerating}, random walk \cite{perra2012random,naokirocha2014random,alessandretti2017random}, epidemic processing \cite{speidel2016temporal, onaga2017concurrency,valdano2015analytical,somers2023sparse}, network control \cite{li2017fundamental}, and reaction-diffusion systems \cite{petit2017theory}.
	Specifically, we first run the birth-death process on
	$G_{1}$ for $\tau$ time steps. Then, we switch to $G_{2}$ and run the same birth-death process on $G_2$ for $\tau$ time steps.
	Then, we switch back to $G_{1}.$ We keep flipping between
	$G_{1}$ and $G_{2}$ every $\tau$ time steps until the fixation of either type occurs.

\section{Computation and theoretical properties of the fixation probability in switching networks}
	
In this section, we describe the methods for calculating the fixation probability of a single mutant, i.e., the probability that the mutant type of fitness $r$ fixates when there is initially just one node of the mutant type that is selected uniformly at random. We extend the methods for static networks \cite{hindersin2016exact} to our model. We also state some mathematical properties of the fixation probability in switching networks.
	
	\subsection{Fixation probability in static networks}
	
	We first explain the known procedure for calculating the fixation probability of the mutant type, which we simply refer to as the fixation probability in the following text, in any static weighted network using Markov chains~\cite{lieberman2005evolutionary,hindersin2016exact}. We describe the state of the evolutionary dynamics by an $N$-dimensional binary vector $\bm{s} = (s_1, \ldots, s_N)$, where $s_i \in \{0, 1\}, \forall i \in \{1, \ldots, N\}$.
	%
	%
	For each $i$, let $s_i = 0$ or $s_i = 1$ indicate that node $i$ is occupied by a resident or a mutant, respectively. Let $S$ be the set of all states. Note that $S$ has cardinality $2^{N}$, that is, there are $2^{N}$ states and that there are $\binom{N}{m}$ states with $m$ mutants. We label the states by a bijective map, denoted by $f$, from $S$ to $\{1,\dots,2^N\}$. The transition probability matrix of the Markov chain, denoted by $T = (T_{ij})$, is a $2^N\times 2^N$ matrix. Its entry $T_{f(\bm{s}),f(\bm{s'})}$ represents the probability that the state changes from $\bm{s}$ to $\bm{s'}$ in one time step. It should be noted that $T_{f(\bm{s}),f(\bm{s'})}$
	can be non-zero if and only if vectors $\bm{s}$ and $\bm{s'}$ differ in at most one entry. Therefore, each row of $T$ has at
	most $N+1$ non-zero entries.
	
	Let $\bm{s}$ be a state with $m$ mutants, $s_i = 1$ for $i \in \{g(1), \ldots, g(m)\}$, and $s_i = 0$ for $i \in \{g(m+1), \ldots, g(N)\}$, where $g$ is a permutation on $\{1, \ldots, N \}$. Let $\bm{s'}$
	be the state with $m+1$ mutants in which $s'_i = 1$ for $i \in \{g(1), \ldots, g(m), g(m+1)\}$ and $s'_i = 0$ for $i\in \{g(m+2), \ldots, g(N)\}$. Note that $\bm{s}$ and $\bm{s'}$ differ only at the $g(m+1)$th node, where $\bm{s}$ has a resident and $\bm{s'}$ has a mutant. We obtain
		\begin{linenomath}
	\begin{equation}
		T_{f(\bm{s}),f(\bm{s'})} = \frac{r}{rm + N-m}\sum_{m'=1}^{m}\frac{ A_{g(m'), g(m+1)} } {w(g(m'))},
	\end{equation}
	\end{linenomath}
	where $A$ denotes the weighted adjacency matrix of the network, i.e., $A_{ij}$ is the weight of edge $(i, j)$,  and $w(i) \equiv
	\sum_{j=1}^N A_{ij}$ represents the weighted degree of the $i$th node, also called the strength of the node. 
	Next, consider a state $\bm{s''}$ with $m-1$ mutants such that $s_i'' = 1$ for $i \in \{ g(1), \ldots, g(\tilde{m}-1), g(\tilde{m}+1), \ldots, g(m)\}$ and $s''_i = 0$ for 
	$i \in \{ g(\tilde{m}), g(m+1), g(m+2), \ldots, g(N) \}$, \add{ where $\tilde{m} \in \{ 1, \ldots, m \}$}. We obtain
	\begin{linenomath}
	\begin{equation}
		T_{f(\bm{s}),f(\bm{s''})}=\frac{1}{rm+N-m}\sum_{m'=m+1}^{N}\frac{A_{g(m'),g(\tilde{m})}}{w(g(m'))}.
	\end{equation}
\end{linenomath}
	The probability that the state does not change after one time step is given by
	\begin{linenomath}
	\begin{equation}
		T_{f(\bm{s}), f(\bm{s})} =
		1 - \frac{r}{rm+N-m}\sum_{\ell=m+1}^{N}\sum_{m'=1}^{m}\frac{A_{g(m'),g({\ell)}}}{w(g(m'))}
		-\frac{1}{rm+ N-m}\sum_{\tilde{m}=1}^{m}\sum_{m'={m+1}}^{N}\frac{A_{g(m'),g(\tilde{m})}}{w(g(m'))}.
	\end{equation}
	\end{linenomath}
	Let $x_{f(\bm{s})}$ denote the probability that the mutant fixates when the evolutionary dynamics start from state $\bm{s}$. Because
	\begin{linenomath}
	\begin{equation}
		x_{f(\bm{s})}=\sum_{\bm{s'}\in S}T_{f(\bm{s}),f(\bm{s'})}x_{f(\bm{s'})},
	\end{equation}
\end{linenomath}
	we obtain 
	$T\bm{x}=\bm{x}$, where $\bm{x}= \left(x_1, \dots  ,x_{2^N} \right)^{\top}$, and ${}^\top$ represents the transposition. Because $x_{f(\left(0, \ldots, 0\right))}=0$ and $x_{f(\left( 1, \ldots, 1\right)) }=1$, we need to solve the set of $2^{N}-2$ linear equations to obtain the fixation probabilities starting from an arbitrary initial state. 
	
	\subsection{Fixation probability in switching networks}
	
	We now consider the same birth-death process on switching network $(G_1, G_2, \tau)$. To calculate the fixation probability in $(G_1, G_2, \tau)$, we denote by $T^{(1)}$ and $T^{(2)}$ the transition probability matrices for the birth-death process on static network $G_1$ and $G_2$, respectively. Let $x_i(t)$ be the fixation probability when the evolutionary dynamics start from the $i$th state (with $i\in \{1, \ldots, 2^N\}$) at time $t$. We obtain
	\begin{linenomath}
	\begin{align}
		\label{eqn:switch}
		\bm{x}(t) = \begin{cases}
			T^{(1)} \bm{x}(t+1) & \text{ if } 2n\tau\le t < \left(2n+1\right)\tau, \\
			T^{(2)} \bm{x}(t+1) &\text{ if } (2n+1)\tau\le t < \left(2n+2\right)\tau,
		\end{cases}
	\end{align}
\end{linenomath}
	where $\bm{x}(t) = \left( x_1(t), \ldots, x_{2^N}(t) \right)^{\top}$ and $n\in \{0,1, \ldots \}$. We recursively use Eq.~\eqref{eqn:switch} to obtain
	\begin{linenomath}
	\begin{align}
		\bm x\left(0\right) =& T^{(1)}\bm x\left(1\right)=\cdots=\left(T^{(1)}\right)^{\tau}\bm x \left(\tau\right)
		=\left(T^{(1)}\right)^{\tau} \left(T^{(2)}\right) \bm x\left(\tau+1\right)=\cdots \notag\\
		=& \left( T^{(1)} \right)^{\tau} \left( T^{(2)} \right)^{\tau} \bm x\left(2\tau\right).
	\end{align}
\end{linenomath}
	Because of the periodicity of the switching network, we obtain $\bm x\left(0\right) = \bm x\left(2\tau\right)$. Therefore, the 
	fixation probability is given as the solution of
	\begin{linenomath}
	\begin{equation}
		\bm{x^*} = \left(T^{(1)}\right)^{\tau} \left(T^{(2)}\right)^{\tau} \bm{x^*}.
	\end{equation}
	\end{linenomath}
	Let $\tilde{S}^{(1)}$ be the set of the $N$ states with just one mutant. Then, the fixation probability when there is initially a single mutant located on a node that is selected uniformly at random
	%
	%
	is given by
	\begin{linenomath}
	\begin{equation}
		\label{eqn:fixn_prob}
		\rho \equiv \frac{1}{N} \sum_{\bm{s}\in \tilde{S}^{(1)}} x^*_{f(\bm{s})}.
	\end{equation}
\end{linenomath}
	Note that $\rho$ is a function of $r$ and depends on the network structure. Because $\left(T^{(1)}\right)^{\tau} \left(T^{(2)}\right)^{\tau}$ is a stochastic matrix with two absorbing states, it has a unique solution \cite{Broom_unique_2008,stochastic_modelling_book}.
	
	The birth-death process on switching networks has the following property.
	\begin{thm}{(Neutral drift)}\label{thm:neutral-drift}
		If $r=1$, then $\rho=\frac{1}{N}$ for arbitrary $G_1$, $G_2$, and $\tau \in \mathbb{N}$.
	\end{thm}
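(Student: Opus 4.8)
The plan is to exploit the special structure that $r=1$ confers on the selection step. When $r=1$ every node has fitness $1$, so in each time step the reproducing node $v$ is chosen uniformly at random and a neighbor $v'$ is chosen with probability $A_{vv'}/w(v)$, after which $v'$ copies the type of $v$. Crucially, this rule for choosing the ordered pair $(v,v')$ does not depend on the current configuration $\bm s$: the dynamics becomes a voter model in which the ``who-copies-whom'' mechanism is decoupled from the types. This holds identically on $G_1$ and on $G_2$, so on the switching network the pair $(v,v')$ is drawn from a time-dependent but still configuration-independent law. I regard this decoupling as the crux of the whole argument.

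As a warm-up for a single static network $G$, a short computation from the transition probabilities above shows that $\Phi(\bm s)=\sum_{i=1}^N s_i/w(i)$ is conserved in expectation, i.e.\ $\mathbb{E}[\Phi(\bm s(t+1))\mid \bm s(t)]=\Phi(\bm s(t))$, so optional stopping gives $x^*_{f(\bm s)}=\bigl(\sum_{i:\,s_i=1}1/w(i)\bigr)/\bigl(\sum_{i=1}^N 1/w(i)\bigr)$ and hence $\rho=\frac1N$ after averaging over the $N$ single-mutant states. This martingale is network-specific, though: the strength $w(i)$ differs between $G_1$ and $G_2$, so $\Phi$ is harmonic for one network but not the other, and there is in general no common nonconstant martingale. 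Thus this computation does not compose across the switches, which forces me to adopt a configuration-independent, genealogical viewpoint.

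Accordingly, I would fix one realization of the random sequence of ordered pairs $(v,v')$ produced by the switching dynamics; by the first paragraph this sequence is sampled with no reference to the types. Reading it backward in time defines an ancestral map: after an update $v\to v'$ the type at $v'$ equals the type that was at $v$, so each node's type is inherited along a lineage that, traced backward, performs a coalescing walk on the time-varying network. Because the pair law is configuration-independent, this coalescing process is autonomous. On a switching network for which fixation occurs almost surely, all $N$ lineages coalesce to a single common-ancestor node $a$ no later than the fixation time, and the type that fixates is exactly the initial type carried by $a$. Hence, starting from a single mutant at node $i$, the event that the mutant fixates coincides with $\{a=i\}$, so $y_i\equiv x^*_{f(\bm 1_i)}=\Pr(a=i)$. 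Since the events $\{a=i\}$ for $i=1,\dots,N$ partition the sample space, $\sum_{i=1}^N y_i=1$ and therefore $\rho=\frac1N\sum_{\bm s\in\tilde S^{(1)}}x^*_{f(\bm s)}=\frac1N$, manifestly independent of $\tau$, $G_1$, and $G_2$.

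The main obstacle is the rigorous link between forward fixation and backward coalescence. The fixation time is almost surely finite by the remark preceding the theorem, since $(T^{(1)})^{\tau}(T^{(2)})^{\tau}$ is a stochastic matrix with exactly the two absorbing states $\bm 0$ and $\bm 1$; what remains is to show that full coalescence to a single ancestor has occurred by that time (equivalently, that two distinct surviving lineages would force two distinct types, contradicting fixation) and to make the coupling between the forward pair-sequence and the backward coalescent precise. I expect this bookkeeping of the duality, rather than any inequality or spectral estimate, to be the only delicate part; everything else follows from the configuration-independence granted by $r=1$.
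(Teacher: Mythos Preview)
Your argument is correct, and it reaches the conclusion by the same two ingredients the paper uses---configuration-independence of the update law at $r=1$, and the fact that the single-mutant fixation probabilities sum to $1$---but the route is genuinely different. The paper obtains $\sum_i q_i=1$ by a forward-time coloring trick: place a distinct type $A_i$ at each node, all with fitness $1$; exactly one type fixates, so the winning probabilities $q_i$ sum to $1$, and by type-blindness $q_i$ coincides with the fixation probability of a single mutant initially at node $i$. Your backward-time coalescent is the dual of this---your common ancestor $a$ is precisely the node whose color wins in the $N$-color process---but the paper's formulation is shorter and entirely sidesteps the duality bookkeeping you flag as the delicate part: there is no need to couple forward and backward processes or to argue that coalescence completes by the fixation time, because ``exactly one type fixates'' is just absorption of the forward chain. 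Your genealogical picture, on the other hand, makes the mechanism explicit and extends more naturally if one later wants to identify \emph{which} initial node is favored, not merely that the uniform average is $1/N$. The martingale warm-up is correct but, as you note, irrelevant to the switching case; you can drop it without loss.
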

	
	\begin{proof}
		We imitate the proof given in~\cite{Ruodan23}. Assume a switching network $(G_1, G_2, \tau)$ on $N$ nodes and that each node is initially occupied by a mutant of distinct type, i.e., node $i$ is occupied by a mutant of type $A_i$. We also assume that each mutant has fitness $1$. We denote the probability that mutant $A_i$ fixates by $q_i$. Note that $\sum_{i=1}^{N}q_i=1$. Now we reconsider our original evolutionary dynamics with $r=1$, in which there are only equally strong two types, i.e., resident type and mutant type, with the initial condition in which the mutant type occupies the $i$th node and the resident type occupies all the other $N-1$ nodes. Then, the fixation probability of the mutant is equal to $q_i$ because this model is equivalent to the previous model if we identify $A_i$ with the mutant type and the other $N-1$ types with the resident type. Therefore, the fixation probability for the original model with $r=1$ and the uniform initialization is given by $\sum_{i=1}^{N} q_i / N = 1/N$.
	\end{proof}
	
\begin{remark}
	\add{We acknowledge that a recent study proved a more general version of this theorem and provided extensive discussion on neutral drift~\cite{su2023strategy}.}
\end{remark}
	
	\begin{remark}
	The theorem holds true even if we switch among more than two static networks or if the switching intervals, $\tau$, deterministically change from one switching interval to another. The proof remains unchanged.
	\end{remark}

\add{\subsection{Initialization at random time }}

\add{In this section, we discuss the case in which the initial mutant arises in the switching network $\left(G_1,G_2,\tau\right)$ at a time selected uniformly at random. Without loss of generality, we assume that the initial mutant arises at time $t_0$, where $t_0\in \{0,1,\dots, 2\tau -1\}$. If the initial mutant appears at time $t \geq 2N$, then we can set $t_0 = t \mod 2\tau$.  Similar to Eq.~\eqref{eqn:switch}, we obtain}
\begin{linenomath}
	\begin{align}
		\label{eqn:switch_random_time}
\add{		\bm{x}(t_0) = \begin{cases}
			T^{(1)} \bm{x}(t_0+1) & \text{ if } 0\le t_0 < \tau, \\
			T^{(2)} \bm{x}(t_0+1) &\text{ if } \tau\le t_0 < 2\tau.
		\end{cases}
		}
	\end{align}
\end{linenomath}

\add{We use Eq.~\eqref{eqn:switch_random_time} to obtain}
\begin{linenomath}
\begin{align}
	\label{eqn:switch_random_time_T_1_first}
\add{ \bm x\left(t_0\right) = } & \add{ T^{(1)}\bm x\left(t_0+1\right)=\cdots} \notag\\
\add{=} & \add{\left(T^{(1)}\right)^{\tau-t_0}\bm x \left(\tau\right)} \notag\\
\add{	=} & \add{\left(T^{(1)}\right)^{\tau-t_0} \left(T^{(2)}\right) \bm x\left(\tau+1\right)=\cdots \notag } \\
\add{	=} & \add{ \left( T^{(1)} \right)^{\tau-t_0} \left( T^{(2)} \right)^{\tau} \bm x\left(2\tau\right)} \notag\\
\add{=} & \add{\left( T^{(1)} \right)^{\tau-t_0} \left( T^{(2)} \right)^{\tau}\left( T^{(1)} \right)^{t_0} \bm x\left(2\tau+t_0\right)
	}
\end{align}
\end{linenomath}
\add{when $0\leq t_0 <\tau$ and}

\begin{linenomath}
\begin{align}
	\label{eqn:switch_random_time_T_2_first}
\add{	\bm x\left(t_0\right) = } & \add{ T^{(2)}\bm x\left(t_0+1\right)=\cdots} \notag\\ 
	%
\add{=} & \add{\left( T^{(2)} \right)^{2\tau-t_0} \left( T^{(1)} \right)^{\tau}\left( T^{(2)} \right)^{t_0-\tau} \bm x\left(2\tau+t_0\right)}
\end{align} 
\end{linenomath}
\add{when $\tau\leq t_0 <2\tau$.}
\add{Because of the periodicity of the switching network, we obtain $\bm x\left(t_0\right) = \bm x\left(2\tau+t_0\right)$. Therefore, the 
fixation probability, which depends on $t_0$ in the present case, is given as the solution of}
	\begin{linenomath}\add{\begin{align}
	\label{eqn:switch_random_time_soln}
	\bm{x^*}(t_0)  = \begin{cases}
\left( T^{(1)} \right)^{\tau-t_0} \left( T^{(2)} \right)^{\tau}\left( T^{(1)} \right)^{t_0} \bm{x^*}(t_0) & \text{ if } 0\le t_0 < \tau, \\
		 \left( T^{(2)} \right)^{2\tau-t_0} \left( T^{(1)} \right)^{\tau}\left( T^{(2)} \right)^{t_0-\tau} \bm{x^*}(t_0) &\text{ if } \tau\le t_0 < 2\tau.
	\end{cases}
\end{align}}
\end{linenomath}
\add{Equation~\eqref{eqn:switch_random_time_soln} yields}
\begin{linenomath}
\begin{align}
	\label{eqn:switch_random_time_simplify}
	\add{\bm {y^{*}}} \equiv & \add{T^{\left(1\right)}\bm {x^*}\left(t_0\right)} \notag\\
	\add{=} & \add{ \left( T^{(1)} \right)^{\tau-t_0+1} \left( T^{(2)} \right)^{\tau}\left( T^{(1)} \right)^{t_0-1}T^{(1)} \bm{x^*}\left(t_0\right)} \notag\\
	\add{=} & \add{\left( T^{(1)} \right)^{\tau-(t_0-1)} \left( T^{(2)} \right)^{\tau}\left( T^{(1)} \right)^{t_0-1} \bm{y^*} .} \notag\\	%
\end{align}
\end{linenomath}
\add{Therefore, we obtain}
\add{$\bm {y^*} = \bm{x^*}\left(t_0 -1\right)$ when $1\leq t_0<\tau$. Using this relationship recursively, we obtain}
\begin{linenomath}
\begin{align}
	\label{eqn:switch_random_time_simplify_further_2}
	\add{\add{ \bm{x^*}\left(\tau-1-k\right)=} \left(T^{\left(1\right)}\right)^{k}\bm{x^*}\left(\tau-1\right)  }  \notag\\
\end{align}
\end{linenomath}
\add{for $0\leq k <\tau -1$. Similarly, we obtain}
\begin{linenomath}
\begin{align}
	\label{eqn:switch_random_time_simplify_further_3}
	\add{\add{ \bm{x^*}\left(2\tau-1-k\right) =} \left(T^{\left(2\right)}\right)^{k}\bm{x^*}\left(2\tau-1\right) }  \notag\\
\end{align}
\end{linenomath}
\add{for $0\leq k <\tau -1$. By combining Eqs.~\eqref{eqn:switch_random_time_simplify_further_2} and \eqref{eqn:switch_random_time_simplify_further_3}, we obtain}
\add{
	\begin{linenomath}
\begin{equation}
	\label{eqn:fixn_probab_soln_random_time_vector}
	\bm{x^*} = \frac{1}{2\tau}
	\left\{ \left[
	\sum_{k=0}^{\tau-1}\left(T^{\left(1\right)}\right)^{k}\right] \bm{x^*}\left(\tau-1\right) + \left[ \sum_{k=0}^{\tau-1}\left(T^{\left(2\right)}\right)^{k} \right]
	\bm{x^*}\left(2\tau-1\right)\right\},
\end{equation}
\end{linenomath}
}
\add{where $\bm{x^*}$ is the fixation probability vector when the initial mutant appears at a uniformly randomly drawn time.}

\add{As in Eq.~\eqref{eqn:fixn_prob}, let $\tilde{S}^{(1)}$ be the set of the $N$ states with just one mutant. Then, the fixation probability for a single mutant when the initial mutant appears at a uniformly randomly drawn time 
	%
	%
	is given by
	\begin{linenomath}
	\begin{equation}
		\label{eqn:fixn_probab_time_t_0}
		\rho = \frac{1}{N} \sum_{\bm{s}\in \tilde{S}^{(1)}} x^*_{f(\bm{s})}.
	\end{equation}
\end{linenomath}
}

\add{\subsection{Stochastic switching\label{sub:stochastic-switching}}}

	\add{In this section, we formulate the fixation probability for stochastic switching networks. We adapt the methods proposed for epidemic spreading~\cite{ogura2016stability,ogura2016epidemic} and evolutionary games \cite{su2023strategy} to the case of constant-selection dynamics. We assume that the network switches with probability $p$ at every time step. In other words, if the network is $G_{1}$ at time $t$, then
		it switches to $G_2$ at time $t+1$ with probability $p$ and remains $G_1$ with probability $1-p$. Likewise, if the network is $G_2$ at time $t$, then it switches to $G_1$ at time $t+1$ with probability $p$ and remains $G_2$ with probability $1-p$. The duration of $G_1$ and that of $G_2$ before switching to the other network, $\tau$, obeys the geometric distribution with ${\displaystyle \Pr(\tau)=(1-p)^{\tau-1}p}$, where $\Pr$ denotes the probability.}

\add{We can write the state of the dynamics at any time $t$ as $\left(\bm{s}, G_{i}\right)$, where
		$\bm{s}$ is one of the $2^{N}$ states (i.e., $\bm{s} \in S$) as in the deterministic switching
		case, and $i \in \{ 1, 2 \}$. Let $x_{\left(\bm{s}, G_{i}\right)}\left(t\right)$ denote the probability
		that the dynamics attains fixation when starting in state
		$\bm{s}$ at time $t$. We obtain}
	\begin{linenomath}
		\begin{equation}
			\label{eqn:stochastic_recursive}
			\add{x_{\left(\bm{s}, G_{i}\right)}\left(t\right)=\sum_{\bm{s'} \in S} T_{\bm{s} \rightarrow \bm{s'}}^{(i)}\left[p\cdot x_{\left(\bm{s'}, G_{i'}\right)}\left(t+1\right)+\left(1-p\right)\cdot x_{\left(\bm{s'}, G_{i}\right)}\left(t+1\right)\right],}
		\end{equation}
	\end{linenomath}
\add{where $i' = 2$ if $i=1$ and $i' = 1$ if $i = 2$. Let $\tilde{T}$  be the $\left(2^{N}\times2\right)\times\left(2^{N}\times2\right)$ transition probability matrix defined by $\tilde{T}_{\left(\left(\bm{s}, G_{i}\right),\left(\bm{s'}, G_{i'}\right)\right)}=pT_{\bm{s} \rightarrow \bm{s'}}^{(i)}$ and $\tilde{T}_{\left(\left(\bm{s}, G_{i}\right),\left(\bm{s'}, G_{i}\right)\right)}=\left(1-p\right)T_{\bm{s} \rightarrow \bm{s'}}^{(i)}$ for $i \in \{1, 2 \}$. Matrix $\tilde{T}$ is the following block matrix:}
		\add{\begin{linenomath}\begin{equation}
		\tilde{T}=\begin{bmatrix}pT^{(1)} & \left(1-p\right)T^{(1)}\\
			pT^{(2)} & \left(1-p\right)T^{(2)}
		\end{bmatrix}.
		\end{equation}\end{linenomath}}
\add{Using $\tilde{T}$, we rewrite Eq.~\eqref{eqn:stochastic_recursive} as}
\begin{linenomath}
		\begin{equation}
			\add{\bm{x}\left(t\right)=\tilde{T}\bm{x}\left(t+1\right),}
		\end{equation}
	\end{linenomath}
where		
		\add{\begin{linenomath}\begin{equation}
		\bm{x}\left(t\right) \equiv \begin{bmatrix}x_{\left(\left(0,\ldots,0\right),G_{1}\right)}\left(t\right)\\
			\vdots\\
			x_{\left(\left(1,\ldots,1\right),G_{1}\right)}\left(t\right)\\
			x_{\left(\left(0,\ldots,0\right),G_{2}\right)}\left(t\right)\\
			\vdots\\
			x_{\left(\left(1,\ldots,1\right),G_{2}\right)}\left(t\right)
		\end{bmatrix}.
		\end{equation}\end{linenomath}}
		\add{In fact, $\bm{x}\left(t\right)$ does not depend on $t$. Therefore, to find the fixation probability, we need to solve}
		\begin{linenomath}\begin{equation}
			\add{\bm{x^*}=\tilde{T}\bm{x^*}.}
		\end{equation}\end{linenomath}
		\add{Similar to the derivation of Eq.~\eqref{eqn:fixn_prob}, we find that the fixation probability when there is initially just one mutant on a node selected uniformly at random and the initial network is selected uniformly at random is given by}
		\begin{linenomath}\begin{equation}
			\add{\rho =\frac{1}{2N} \sum_{i=1}^2 \sum_{\bm{s} \in \tilde{S}^{(1)}}x_{\left(\bm{s}, G_i\right)}.}
	\end{equation} \end{linenomath}
	
\subsection{Identifying amplifiers and suppressors \add{of selection}}
	
	We operationally define amplifiers and suppressors \add{of selection} as follows; similar definitions were used in the literature \cite{lieberman2005evolutionary,voorhees2013birth}. For a given switching or static network, we computed the fixation probability for several values of $r$. We say that the network is amplifier \add{of selection} if the fixation probability is larger than for that for the complete graph with the same number of nodes, or equivalently, the Moran process, at six values of $r>1$, i.e., $r \in \{ 1.1,1.2,1.3,1.4,1.6,1.8 \}$ and a smaller than that for the Moran process at three values of $r<1$, i.e., $r \in \{ 0.7,0.8,0.9 \}$. Note that the fixation probability for the Moran process with $N$ individuals is given by (see e.g.\,\cite{nowak_book})
	\begin{linenomath}\begin{equation}
		\rho = \frac{1-\frac{1}{r}}{1-\frac{1}{r^{N}}}.
		\label{eq:fixation-prob-Moran}
	\end{equation}\end{linenomath}
	Similarly, we say that a network is suppressor \add{of selection} if the fixation probability is smaller than for the Moran process at the same six values of $r$ larger than $1$ and larger than for the Moran process at the three values of $r$ smaller than $1$. It is known that some static networks are neither amplifier nor suppressor \add{of selection} \cite{alcalde2018evolutionary}.
	
\add{We note that the Moran process is equivalent to the switching network in which both $G_1$ and $G_2$ are the complete graph. In this manner, one can regard that the comparison between a general switching network and the Moran process is that between two temporal networks instead of that between a temporal network and a static network.}
	%

	\subsection{Isothermal theorem}
	
	A network is called isothermal if its fixation probability is the same as that for the Moran process, i.e., if Eq.~\eqref{eq:fixation-prob-Moran} holds true \cite{lieberman2005evolutionary}. A static undirected network, which may be weighted, is isothermal if and only if all the nodes have the same (weighted) degree \cite{lieberman2005evolutionary,allen2019evolutionary,broom_book}. One can easily construct isothermal switching networks as follows.
	
	\begin{thm}
		If $G_1$ and $G_2$ are isothermal networks, then the switching network $\left(G_1,G_2,\tau\right)$ is an isothermal network.
	\end{thm}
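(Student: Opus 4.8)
The plan is to guess the solution of the fixed-point equation $\bm{x^*} = (T^{(1)})^{\tau}(T^{(2)})^{\tau}\bm{x^*}$ explicitly and verify it, rather than to solve the linear system directly. For a state $\bm{s}$ with $m$ mutants, set $v_{f(\bm{s})} = (1 - r^{-m})/(1 - r^{-N})$; the case $r=1$ is already covered by Theorem~\ref{thm:neutral-drift}, so I assume $r \neq 1$. Since $v_{f((0,\dots,0))} = 0$ and $v_{f((1,\dots,1))} = 1$, the vector $\bm{v}$ satisfies the two boundary conditions. It then suffices to prove that $T^{(i)}\bm{v} = \bm{v}$ for $i \in \{1,2\}$, because this at once gives $(T^{(1)})^{\tau}(T^{(2)})^{\tau}\bm{v} = \bm{v}$, and the uniqueness of the solution noted above forces $\bm{x^*} = \bm{v}$. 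The fixation probability is then $\rho = \frac{1}{N}\sum_{\bm{s}\in\tilde{S}^{(1)}} v_{f(\bm{s})} = v_{1} = (1-r^{-1})/(1-r^{-N})$, which is exactly the Moran value of Eq.~\eqref{eq:fixation-prob-Moran}, so the switching network is isothermal.

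The heart of the argument is verifying $T^{(i)}\bm{v} = \bm{v}$ for a single isothermal network $G_i$, and this is where the isothermal hypothesis enters. Fix a state $\bm{s}$ with $m$ mutants. Because one birth-death step changes the mutant count by at most one and $\bm{v}$ depends only on that count, I would aggregate the single-step transition probabilities of Section~3.1 into the total probability $p_+$ of gaining a mutant and the total probability $p_-$ of losing one. Using the symmetry $A_{ij} = A_{ji}$ and the fact that every node of $G_i$ has the same weighted degree $w_i$, both sums collapse onto the same quantity, namely the weight $C(\bm{s})$ of the edge cut between the mutant and resident sets, giving $p_+ = r\,C(\bm{s})/[(rm+N-m)w_i]$ and $p_- = C(\bm{s})/[(rm+N-m)w_i]$, hence $p_+ = r\,p_-$ for every configuration. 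Writing $(T^{(i)}\bm{v})_{f(\bm{s})} = v_m + p_+(v_{m+1}-v_m) + p_-(v_{m-1}-v_m)$ and inserting the geometric form of $v_m$, the correction term is a positive multiple of $p_+(1-r^{-1}) + p_-(1-r)$, which vanishes precisely because $p_+ = r\,p_-$; therefore $(T^{(i)}\bm{v})_{f(\bm{s})} = v_m$, as required.

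The main obstacle is conceptual rather than computational: one must notice that $p_+ = r\,p_-$ holds at the level of each individual network and does not require $G_1$ and $G_2$ to share a common degree, since $w_i$ cancels in the ratio. This is exactly what makes the switching irrelevant, because the same $\bm{v}$ is simultaneously fixed by $T^{(1)}$ and $T^{(2)}$, and hence by any product of these matrices. The same observation shows the result is robust to the protocol: it extends verbatim to stochastic switching, where $\bm{v}$ (stacked into the two blocks) is a fixed point of the block matrix $\tilde{T}$ of Section~\ref{sub:stochastic-switching}, and to switching among more than two isothermal networks. Equivalently, one may phrase the argument probabilistically, since $p_+ = r\,p_-$ says that $(1/r)^{M_t}$, with $M_t$ the number of mutants at time $t$, is a bounded martingale under either network's dynamics, and optional stopping at the fixation time yields the same closed form for $\rho$.
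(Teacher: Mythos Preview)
Your proof is correct and rests on exactly the same observation as the paper's: for an isothermal network the aggregate up- and down-probabilities from any configuration satisfy $p_{+}=r\,p_{-}$, and since this holds for $G_1$ and for $G_2$ separately it holds at every time step of the switching process. The paper packages this observation differently, though: it notes that $p_{m,m-1}/p_{m,m+1}=1/r$ at every step and then invokes the classical gambler's-ruin formula for the resulting one-dimensional biased random walk on the mutant count, arriving at Eq.~\eqref{eq:fixation-prob-Moran} in one line. You instead stay in the full $2^{N}$-dimensional state space, write down the candidate vector $\bm v$ explicitly, and verify that it is a common fixed point of $T^{(1)}$ and $T^{(2)}$, then appeal to the uniqueness statement already established in the paper. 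Your route is a bit longer but more self-contained (no external birth--death result is cited), and it makes transparent why the argument extends to stochastic switching and to more than two snapshots, which the paper only records as a remark. Either way, the substance is the identity $p_{+}=r\,p_{-}$ and its independence of the particular isothermal snapshot in force; your martingale remark at the end is precisely the probabilistic restatement of the paper's one-line reduction.
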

	
	\begin{proof}
		The proof is exactly the same as in the static network case as shown in \cite{lieberman2005evolutionary,nowak_book}. We denote by $p_{m,m-1}$ the probability that the state of the network moves from a state with $m$ mutants to a state with $m-1$ mutants in one time step. Similarly, we denote by $p_{m,m+1}$ the probability that the state moves from one with $m$ mutants to one with $m+1$ mutants in one time step.
		We observe that $p_{m,m-1}/p_{m,m+1}=1/r$ at every time step $t$ because the static network at any $t$, which is either $G_1$ or $G_2$, is isothermal. Therefore, the fixation probability for $\left(G_1, G_2, \tau\right)$ is given by Eq.~\eqref{eq:fixation-prob-Moran}.
	\end{proof}
	
\begin{remark}
\add{The theorem including the present proof holds true both when we initially use $G_1$ for time $\tau$ and when the mutant arises at a time selected uniformly at random.}
\end{remark}

	\section{Fixation probability in various switching networks}
	
	In this section, we analyze the fixation probability in three types of switching networks, i.e., networks with six nodes, larger switching networks in which $G_1$ and $G_2$ have symmetry (i.e., complete graph, star graph, and bipartite networks), and empirical networks.
	
	\subsection{Six-node networks}	
	
	We first analyzed the fixation probability in switching networks that are composed of two undirected and unweighted connected networks with $6$ nodes. There are 112 non-isomorphic undirected connected networks on 6 nodes. We switched between any ordered pair of different networks, giving us a total of $112\times 111=12432$ switching networks. It should be noted that swapping the order of $G_1$ and $G_2$ generally yields different fixation probabilities.
	We randomly permuted the node labels in $G_2$. We did not consider all possible labeling of nodes because there would be at most $112\cdot111\cdot6!=8951040$ switching networks on 6 nodes if we allow shuffling of node labeling, although the symmetry reduces this number.
	
	In Fig.~\ref{fig:ampsbecomesup}(a), we show two arbitrarily chosen static networks on six nodes, $G_1$ and $G_2$, which are amplifiers \add{of selection} as static networks. In Fig.~\ref{fig:ampsbecomesup}(b), we plot the fixation probability as a function of the fitness of the mutant, $r$, for the switching network $(G_1, G_2, \tau=1)$, the static networks $G_1$ and $G_2$, the aggregate weighted static network generated from $G_1$ and $G_2$, and the Moran process (i.e., complete graph on six nodes). The aggregated weighted static network is the superposition of $G_1$ and $G_2$ such that the weight of the edge is either 1 or 2.
	It is equivalent to the average of $G_1$ and $G_2$ over time. All these static and switching networks yield $\rho = 1/N = 1/6$ at $r=1$, as expected (see Theorem~\ref{thm:neutral-drift}). In addition, there exist differences in $\rho$ between the different networks and the Moran process although
	the difference is small. In fact, $G_1$ and $G_2$ are amplifiers \add{of selection},
	with their fixation probability being larger than that for the Moran process when $r>1$ and vice versa when $r<1$, 
	confirming the known result~\cite{hindersin2015most,alcalde2017suppressors}. Figure~\ref{fig:ampsbecomesup}(b) also indicates that the aggregate network is an amplifier \add{of selection}. However, the switching network is suppressor \add{of selection}. 
	
	We reconfirm these results in Fig.~\ref{fig:ampsbecomesup}(c), in which we show the difference in the fixation probability between a given static or switching network and the Moran process.
	If the difference is negative for $r<1$ and positive for $r>1$, then the network is an amplifier \add{of selection}.
	If the difference is positive for $r<1$ and negative for $r>1$, then the network is a suppressor \add{of selection}.
	Figure~\ref{fig:ampsbecomesup}(c) shows that $G_1$ is a stronger amplifier than $G_2$\del{, which} \add{and that $G_2$} is a stronger amplifier than the aggregate network.
	In contrast, the switching network $(G_1, G_2, 1)$ is a suppressor \add{of selection,} while $(G_1, G_2, 10)$ and $(G_1, G_2, 50)$ are amplifiers \add{of selection}.
	The result for $(G_1 ,G_2, 50)$ is close to that for static network $G_1$, which is because the evolutionary dynamics on $(G_1, G_2, \tau)$ is equivalent to that on $G_1$ in the limit $\tau\to\infty$. \add{In practice, fixation for networks on six nodes occurs within $50$ time steps in many cases, which renders $(G_1 ,G_2, 50)$ close to $G_1$. However, we have included the results for $\tau=50$ because fixation does not occur within 50 time steps in many other cases. When the number of nodes, $N$, is large, $(G_1, G_2, \tau)$ is a genuine switching network because the fixation times are typically much longer than $N$ \cite{hindersin2016exact}.} We conclude that switching networks composed of two amplifiers can be a suppressor, in particular when $\tau$ is small. We emphasize that this counterintuitive result is not due to the property of the aggregate network because the aggregate network, which is the time average of $G_1$ and $G_2$, is also an amplifier.

\add{We show the results for the switching network with the order of $G_1$ and $G_2$ reversed and those for random initialization time in \hyperref[appendix:A1]{Appendix A}. We find that both $(G_2,G_1,1)$ and the switching network with $\tau=1$ and random initialization time are suppressors of selection. The fixation probability for the switching network with $\tau=1$ and random initialization time is the average of that for $(G_1,G_2,1)$ and $(G_2,G_1,1)$. Therefore, the fixation probability for the former lies between that for $(G_1, G_2, 1)$ and $(G_2, G_1, 1)$ at each value of $r$. Switching networks $(G_2, G_1, \tau)$ and those with random initialization time are amplifiers of selection when $\tau$ is larger (i.e., $\tau \in \{ 10, 50 \}$); this result is qualitatively the same as that for $(G_1,G_2,\tau)$. } 
%

To investigate the generality of this finding to other six-node networks, we calculated the fixation probability for the switching networks derived from all possible pairs of six-node networks. Table~\ref{switchstats} shows the number of switching networks on six nodes that are amplifiers \add{of selection}, that of suppressors \add{of selection}, and that of networks that are neither amplifier or suppressor, for four values of $\tau$. The table indicates that a majority of the six-node switching networks investigated are suppressors \add{of selection} when $\tau = 1$ and $\tau=3$. This result is in stark contrast to the fact that there is only 1 suppressor \add{of selection} among 112 six-node static unweighted networks under the birth-death process~\cite{alcalde2017suppressors,hindersin2015most}. Out of the 111 static networks that are not suppressor, 100 networks are amplifiers, five are isothermal, and the other six networks are neither amplifier, suppressor, nor isothermal~\cite{alcalde2018evolutionary,10.1007/978-3-319-56154-7_20}. Most switching networks are amplifiers when $\tau=50$, which is presumably because most static networks are amplifiers and the birth-death process on $(G_1, G_2, \tau)$ converges to that on $G_1$ in the limit $\tau\to\infty$, as we discussed above.

\add{We also examined the fixation probability for six-node stochastic switching networks introduced in section~\ref{sub:stochastic-switching}. As in the case with deterministic switching, we considered $112\times 111$ ordered pairs of networks and permuted the node labels of $G_2$ uniformly at random. We show in~Table~\ref{table:switchstats_stochastic} the number of amplifier of selection, suppressor of selection, and neither type, assuming random initialization time, for $p \in \{0.3, 0.5, 0.8 \}$. We find that a substantial fraction of these stochastic switching networks is suppressors of selection for each of the three $p$ values (i.e., 36.5\% for $p=0.3$;  47.3\% for $p=0.5$; 24.3\% for $p=0.8$). These results suggest that the abundance of suppressing switching networks among six-node switching networks is not due to the periodic switching nature of our switching network model.}
	
	\begin{figure}[!htbp]%
		\begin{center}
			\centering
			{\includegraphics[width=13cm]{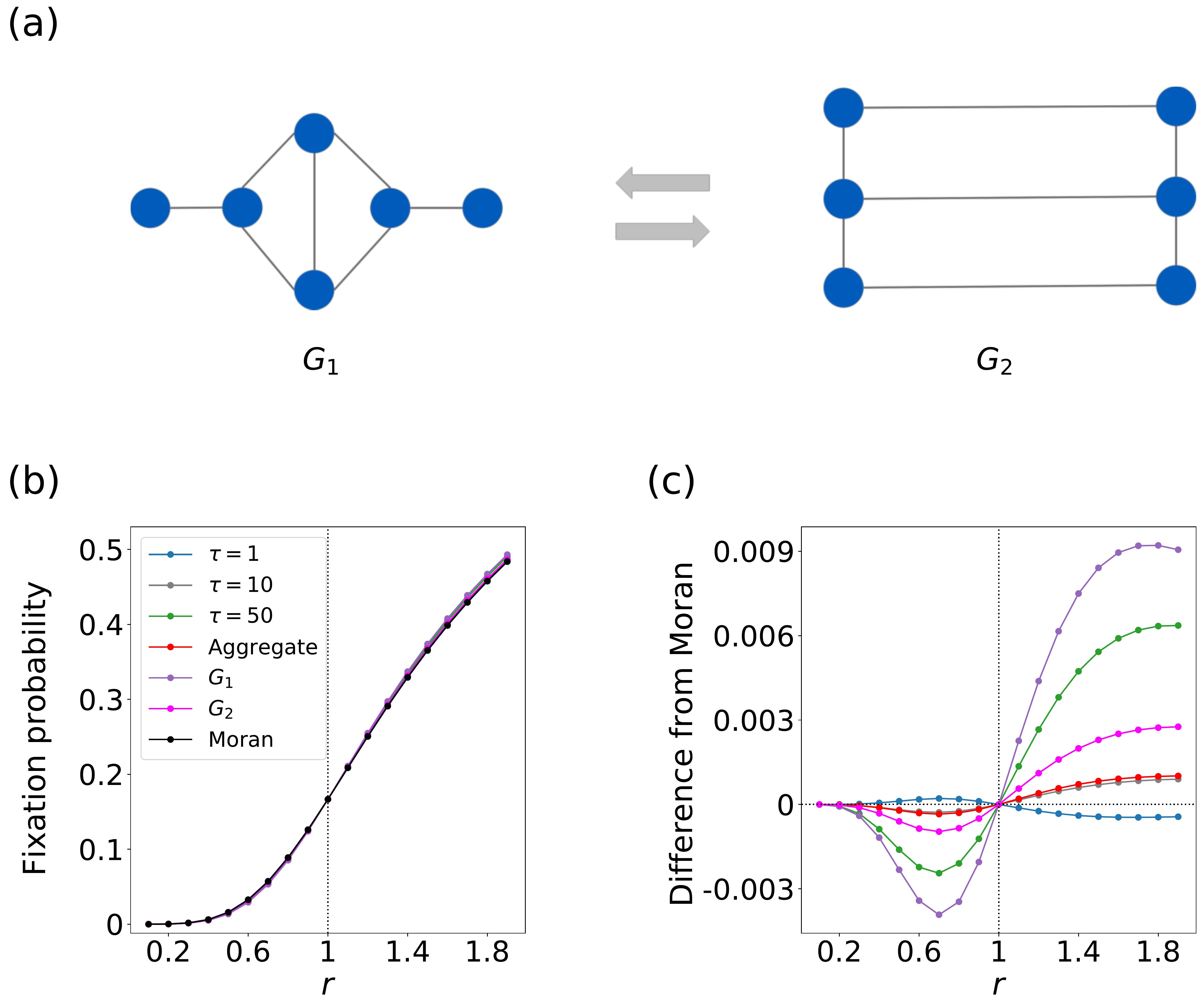} }
			\caption{A suppressing switching network composed of two amplifying static networks on six nodes. (a) A switching network composed of six nodes. Both $G_1$ and $G_2$ are amplifiers \add{of selection}. (b) Fixation probability in the static and switching networks as a function of $r$. Moran refers to the Moran process. Note that $G_1$, $G_2$, the aggregate network, and the Moran process represent static networks. (c) Difference between the fixation probability for the given network and that for the Moran process.}
			\label{fig:ampsbecomesup}%
		\end{center}
	\end{figure}

	\subsection{Larger symmetric switching networks}
	
	In this section, we assume symmetry in $G_1$ and $G_2$ to calculate the fixation probability for larger switching networks. Specifically,
	we set $G_1$ to be the star graph and $G_2$ to be either the complete graph or complete bipartite graph.
	\begin{linenomath}
	\begin{table}[t] 
		\centering
		\begin{tabular}{cccc} 
			\hline
			$\tau$ & Amplifier & Suppressor  & Neither\\ \hline
			1& 3636& 8177&  619\\
			3& 5190& 6347&  895\\
			10& 11102& 629& 701\\
			50& 12038& 262& 132\\ \hline
			\hline
		\end{tabular}
		{\caption{Number of amplifiers and suppressors \add{of selection} among $112\cdot 111= 12432$ \add{periodically} switching networks on six nodes.}
			\label{switchstats}}
	\end{table}
		\end{linenomath}
		\add{\begin{table}[t] 
		\centering
		\add{\begin{tabular}{cccc} 
			\hline
			$p$ & Amplifier & Suppressor  & Neither\\ \hline
			0.3& 7346& 4536&  550\\
			0.5& 5979& 5880&  573\\
			0.8& 8881& 3023& 528\\ \hline
			\hline
		\end{tabular}}
		{\caption{\add{Number of amplifiers and suppressors of selection among $112\cdot 111= 12432$ stochastically switching networks on six nodes.}}
			\label{table:switchstats_stochastic}}
	\end{table}}

	\subsubsection{Combination of the star graph and the complete graph\label{sub:star-complete}} 
	
	Consider switching networks in which $G_1$ is the star graph and $G_2$ is the complete graph. For this switching network, we can reduce the dimension of the transition probability matrix from $2^N \times 2^N$ to $2N\times2N$ by exploiting the symmetry in $G_1$ and $G_2$. Therefore, one can reduce the number of equations from $2^N -2$ to $2N-2$. Specifically, one can uniquely describe the state of the network by $(i, j)$, where $i\in \{0, 1\}$ and $j \in \{0, \ldots, N-1\}$. We set $i=0$ and $i=1$ when the hub node of $G_1$ is occupied by a resident or mutant, respectively.
	We set $j \in \{ 0, 1, \ldots, N-1 \}$ to the number of mutants in the other $N-1$ nodes, which we refer to as the leaves. Tuple $(i, j)$ is a valid expression of the state of the network because the $N-1$ leaves are structurally equivalent to each other in both $G_1$ and $G_2$. Tuples $(0, 0)$ and $(1, N-1)$ correspond to the fixation of the resident and mutant type, respectively.
	
	The transition probability from state $\left(i,j\right)$ to state $\left(i',j'\right)$ in a single time step of the birth-death process is nonzero if and only if $(i', j') = (i, j + 1)$ and $i=1$, $(i', j') = (i, j-1)$ and $i=0$, $(i', j') = (1-i, j)$, or $(i', j') = (i, j)$.
	Let $T^{(1)}$ denote the transition probability matrix for the star graph. We obtain
	\begin{linenomath}
	\begin{equation}
		T^{(1)}_{\left(i,j\right)\rightarrow\left(i',j'\right)} = \begin{cases}
			\frac{rj}{C_1} & \text{if }i=0 \text{ and }i'=1,
			\\[1mm]
			\frac{N-1-j}{C_2} & \text{if }i=1 \text{ and }i'=0,
			\\[1mm]
			\frac{1}{C_1}\cdot\frac{j}{N-1} & \text{if } i'=i=0 \text{ and }j' = j-1,
			\\[1mm]
			\frac{r}{C_2}\cdot\frac{N-1-j}{N-1}  & \text{if } i'=i=1 \text{ and } j' = j+1,
			\\[1mm]
			1-\sum\limits_{\mathclap{\substack{(i'',j'')\neq \\ (i,j)}}}T^{\left(1\right)}_{(i,j) \rightarrow (i'',j'')} & \text{if } (i',j')=(i, j),
			\\[1mm]
			0 & \text{ otherwise,}
		\end{cases}
		\label{eq:star-complete-T1}
	\end{equation}\end{linenomath}
	where $C_1= rj+N-j $ and $ C_2 = r(j+1)+N-\left(j+1\right) $  \cite{lieberman2005evolutionary}.
	The first line of Eq.~\eqref{eq:star-complete-T1} represents the probability that the type of the hub changes from the resident to mutant. For this event to occur, one of the $j$ leaf nodes occupied by the mutant must be chosen as parent, which occurs with probability $rj/\left(rj+N-j\right)$. Because every leaf node is only adjacent to the hub node, the hub node is always selected for death if a leaf node is selected as parent. Therefore, the probability of $i$ changing from $0$ to $1$ is equal to $rj/\left(rj+N-j\right)$, which is shown in the first line of Eq.~\eqref{eq:star-complete-T1}. As another example, consider state $\left(1,j\right)$, in which the hub has a mutant, $j$ leaf nodes have mutants, and the other $N-1-j$ leaf nodes have residents. For the state to change from $\left(1,j\right)$ to $ \left(1,j+1\right) $, the hub node must be selected as parent with probability $r/\left[r\left(j+1\right)+N-\left(j+1\right)\right]$, and a leaf node of the resident type must be selected for death, which occurs with probability $(N-1-j)/(N-1)$. The fourth line of Eq.~\eqref{eq:star-complete-T1} is equal to the product of these two probabilities.
	One can similarly derive the other lines of Eq.~\eqref{eq:star-complete-T1}.
	
	The transition probability matrix for $G_2$, which is the complete graph, is given by
	\begin{linenomath}
	\begin{equation}
		T^{(2)}_{\left(i,j\right)\rightarrow\left(i',j'\right)} = \begin{cases}
			\frac{rj}{C_1}\cdot\frac{1}{N-1}  & \text{if }i=0 \text{ and }i'=1,
			\\[1mm]
			\frac{N-1-j}{C_2}\cdot\frac{1}{N-1} & \text{if } i=1 \text{ and } i'=0,
			\\[1mm]
			\frac{N-j}{C_1}\cdot\frac{j}{N-1} & \text{if } i' = i=0 \text{ and } j' = j-1,
			\\[1mm]
			\frac{rj}{C_1}\cdot\frac{N-1-j}{N-1} & \text{if } i' = i=0 \text{ and } j' = j + 1,
			\\[1mm]
			\frac{N-1-j}{C_2}\cdot\frac{j}{N-1} & \text{if } i' = i=1 \text{ and } j' = j-1,
			\\[1mm]
			\frac{r(j+1)}{C_2}\cdot\frac{N- 1- j}{N-1}  & \text{if } i' = i=1 \text{ and } j' = j + 1,
			\\[1mm]
			1-\sum\limits_{\mathclap{\substack{(i'',j'')\neq \\ (i,j)}}}T^{\left(1\right)}_{(i,j) \rightarrow (i'',j'')} & \text{if } (i',j')=(i, j),
			\\[1mm]
			0 & \text{ otherwise.}
		\end{cases}
		\label{eq:star-complete-T2}
	\end{equation}
\end{linenomath}
	For example, for the transition from state $\left(0,j\right) $ to $ \left(1,j\right)$ to occur,  one of the $j$ mutant leaf nodes must be first selected as parent, which occurs with probability $rj/\left(rj+N-j\right)$.
	Then, the hub node must be selected for death, which occurs with probability $1/\left(N-1\right)$.
	The first line of Eq.~\eqref{eq:star-complete-T2} is equal to the product of these two probabilities.
	As another example, for the state to change from
	$ \left(1,j\right) $ to  $ \left(1,j+1\right) $, one of the mutant nodes, which may be the hub or a leaf, must be first selected as parent, which occurs with probability $ r\left(j+1\right)/\left[r\left(j+1\right)+N-\left(j+1\right)\right]$. Then, a leaf node of the resident type must be selected for death, which occurs with probability $ \left(N-1-j\right)/\left(N-1\right)$. 
	The right-hand side on the sixth line of Eq.~\eqref{eq:star-complete-T2} is equal to the product of these two probabilities.
	One can similarly derive the other lines of Eq.~\eqref{eq:star-complete-T2}. It should be noted that
	single-step moves from $\left(1,j\right)$ to $\left(1,j-1\right)$ and those from 
	$\left(0,j\right)$ to $\left(0,j+1\right)$ are possible in $G_2$, whereas they do not occur in $G_1$.
	
	In Fig.~\ref{fig:unified_starcomplete}(a), we plot the fixation probability as a function of $r$ for switching network $(G_1, G_2, \tau)$ in which $G_1$ is the star graph and $G_2$ is the complete graph on four nodes. In this figure, we compare $(G_1, G_2, \tau)$ with $\tau =1$, $10$, and $50$, the static star graph, the aggregate network, and the Moran process. Figure~\ref{fig:unified_starcomplete}(a) indicates that $(G_1, G_2, 10)$ and $(G_1, G_2, 50)$ are amplifiers \add{of selection} and that $(G_1, G_2, 1)$ is a suppressor. We plot the difference in the fixation probability between the switching networks and the Moran process in Fig.~\ref{fig:unified_starcomplete}(b). When $\tau=1$, the difference is positive for $r<1$ and negative for $r>1$, which verifies that $(G_1, G_2, 1)$ is a suppressor. This result is surprising because $G_1$ is an amplifier \add{of selection} and $G_2$ is equivalent to the Moran process and therefore not a suppressor \add{of selection}. In contrast, when $\tau=10$ and $\tau=50$, the difference from the Moran process is negative for $r<1$ and positive for $r>1$, which verifies that $(G_1, G_2, 10)$ and $(G_1, G_2, 50)$ are amplifiers \add{of selection}. The result for $\tau=50$ is close to that for the star graph. This is presumably because the first $\tau=50$ steps with $G_1$ are sufficient to induce fixation with a high probability given the small network size (i.e., $N=4$). 
	
	Figures~\ref{fig:unified_starcomplete}(a) and \ref{fig:unified_starcomplete}(b) also indicate that the aggregate network is a weak suppressor \add{of selection}. However, the aggregate network is a considerably weaker suppressor \add{of selection} than $(G_1, G_2, 1)$. Therefore, we conclude that the suppressing effect of the switching network mainly originates from the time-varying nature of the network rather than the structure of the weighted aggregate network.
	
	We show in Figs.~\ref{fig:unified_starcomplete}(c) and ~\ref{fig:unified_starcomplete}(d) the fixation probability and its difference from the case of the Moran process, respectively, as a function of $r$ for $N=50$. We observe that the switching network is an amplifier \add{of selection} for all the values of $\tau$ that we considered, i.e., $\tau=1$, $10$, and $50$. In contrast, the aggregate network is a suppressor \add{of selection} albeit an extremely weak one. The amplifying effect of the switching network is stronger for a larger value of $\tau$. Unlike in the case of four nodes (see Figs.~\ref{fig:unified_starcomplete}(a) and \ref{fig:unified_starcomplete}(b)), the switching networks with 50 nodes are far less amplifying than the star graph even with $\tau=50$. This phenomenon is expected because fixation in a static network with 50 nodes usually needs much more than 50 steps.
	
	These results for the switching networks with $N=4$ and $N=50$ nodes remain similar for $(G_2, G_1, \tau)$, i.e., when we swap the order of 
	$G_1$ and $G_2$ (see Figs.~\ref{fig:reversed_order_star_com_bipartite}(a) and \ref{fig:reversed_order_star_com_bipartite}(b)).
	
	The present switching network is a suppressor \add{of selection} when $N=4$ and $\tau=1$ and an amplifier \add{of selection} when $N=50$ or $\tau \in \{10, 50\}$.
	To examine the generality of these results with respect to the number of nodes, $N$,
	we show in Figs.~\ref{fig:unified_starcomplete}(e) and ~\ref{fig:unified_starcomplete}(f) the fixation probability relative to that for the Moran process
	at $\tau=1$ and $\tau=50$, respectively, as a function of $N$. In both figures, we show the fixation probabilities at $r=0.9$ and $r=1.1$. Figure~\ref{fig:unified_starcomplete}(e) indicates that the switching network is a suppressor \add{of selection} for $N \leq 4$ and an amplifier \add{of selection} for $N \ge 5$ when $\tau = 1$.
	We have confirmed that this switching network with $N=3$ nodes is a suppressor \add{of selection} by calculating the fixation probability across a range of $r$ values in (see Fig.~\ref{fig:suppressors_additional}(a) \add{ in \hyperref[sec:a2-further-examples-of-small-amplifying-switching-networks-in-which-g1-is-the-star-graph]{Appendix B}).}
	Figure~\ref{fig:unified_starcomplete}(f) indicates that $(G_1, G_2, 50)$ is an amplifier \add{of selection} for any $N$.
	
	\begin{figure}
		\begin{center}
			\includegraphics[width=14cm]{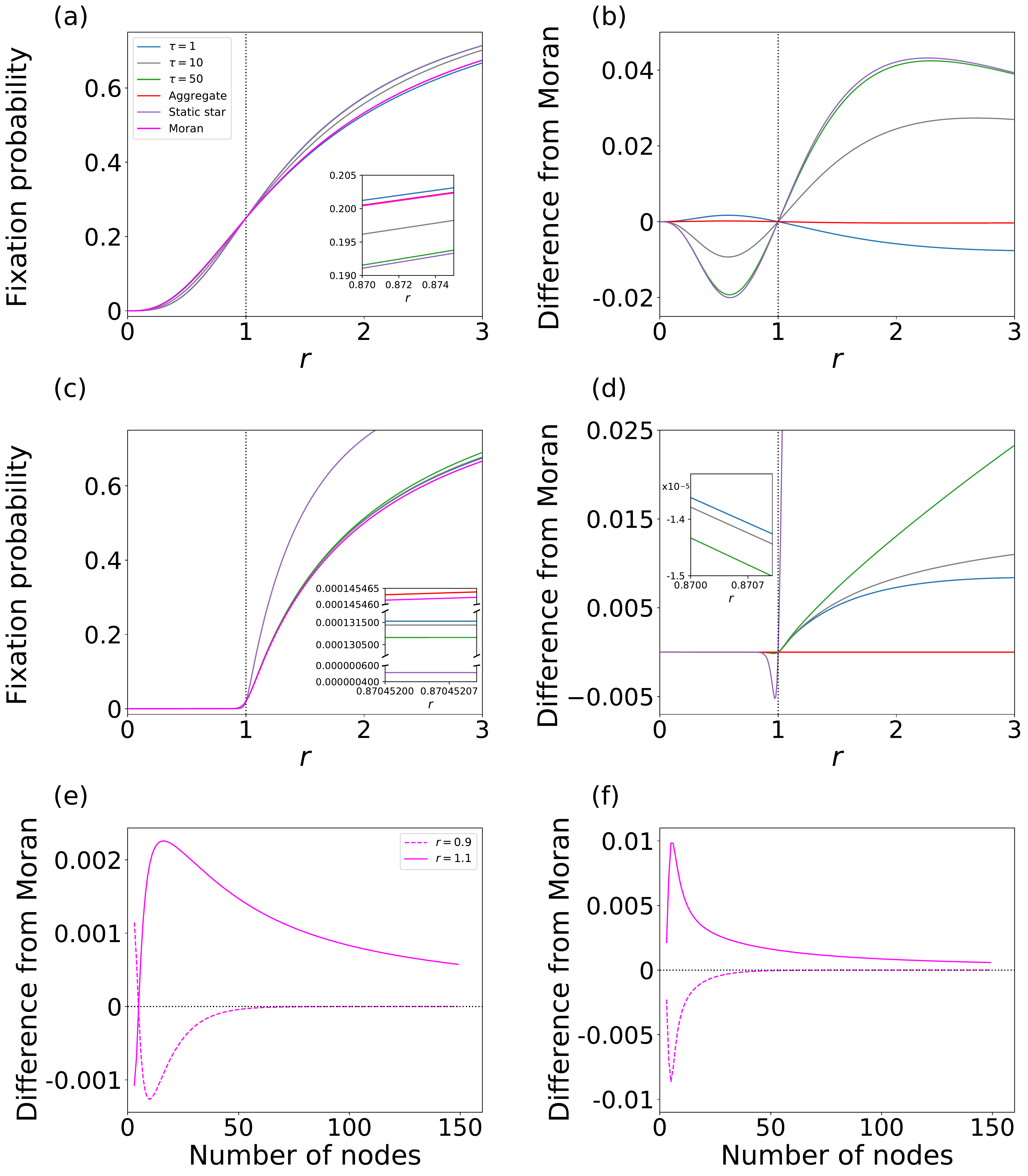}
			\caption{Fixation probability for switching networks in which $G_1$ is the star graph and $G_2$ is the complete graph. (a) Fixation probability for $N=4$. (b) Difference in the fixation probability from the case of the Moran process for $N=4$. (c) Fixation probability for $N=50$.  (d) Difference in fixation probability from the case of the Moran process for $N=50$. In (a)--(d), we also show the results for $G_1$ (i.e., star graph) and the aggregate network, and the vertical lines at $r=1$ are a guide to the eyes. The insets magnify selected ranges of $r<1$. (e) and (f): Difference in the fixation probability for the switching network relative to the Moran process as a function of $N$ at $r=0.9$ and $1.1$. We set $\tau=1$ in (e) and $\tau=50$ in (f). In (e) and (f), the smallest value of $N$ is three.}
			\label{fig:unified_starcomplete}
		\end{center}
	\end{figure}

	\subsubsection{Combination of the star graph and the complete bipartite graph}
	
	In this section, we analyze the switching network in which $G_1$ is the star graph and $G_2$ is the complete bipartite graph $K_{N_{1},N_{2}}$. By definition,
	$K_{N_{1}, N_{2}}$ has two disjoint subsets of nodes $V_1$ and $V_2$, and $V_1$ and $V_2$ contain $N_1$ and $N_2 $ nodes, respectively. Every node in $V_1$ is adjacent to every node in $V_2$ by an edge. Therefore, every node in $V_2$ is adjacent to every node in $V_1$. Without loss of generality, we assume that the hub node in $G_1$ is one of the $N_1$ nodes in $V_1$. 
	
	Because of the symmetry, we do not need to distinguish among the $N_1 - 1$ nodes that are leaf nodes in $G_1$ and belong to $V_1$ in $G_2$, or among the $N_2$ nodes that belong to $V_2$ in $G_2$. Therefore, one can specify the state of this switching network by a tuple $(i, j, k)$, where $i \in \{0,1 \}$ represents whether the hub is occupied by a resident, corresponding to $i=0$, or mutant, corresponding to $i=1$; variable $j \in \{0, \ldots, N_1 - 1\}$ represents the number of mutants among the $N_1 - 1$ nodes that are leaves in $G_1$ and belong to $V_1$ in $G_2$; variable $k \in \{0, \ldots, N_2 \}$ represents the number of mutants among the $N_2$ nodes in $V_2$. Tuples $(0, 0, 0)$ and $(1, N_1-1, N_2)$ correspond to the fixation of the resident and mutant type, respectively. Using this representation of the states, we reduce the $2^N \times 2^N$ transition probability matrix to a $2N_{1}\left(N_{2}+1\right)\times 2N_{1}\left(N_{2}+1\right)$ transition probability matrix. \add{We show the transition probabalities $T^{(1)}$ and $T^{(2)}$ in
	\hyperref[sec:adda3-transition-probabilities-for-the-combination-of-the-star-graph-and-the-complete-bipartite-graph]{Appendix C}.}

	\del{The transition probability matrix for the star graph is given by}
	\begin{equation}\del{
		T^{\left(1\right)}_{\left(i,j,k\right)\rightarrow\left(i',j',k'\right)} = \begin{cases}
			\frac{r(j+k)}{C_3} & \text{if } i=0 \text{ and } i'=1,
			\\[1mm]
			\frac{N-1-j-k}{C_4} & \text{if } i = 1 \text{ and } i'=0,
			\\[1mm]
			\frac{1}{C_3}\cdot\frac{j}{N-1} & \text{if } i=0 \text{ and }j'= j-1,
			\\[1mm]
			\frac{1}{C_3}\cdot\frac{k}{N-1} & \text{if }i=0 \text{ and }k' = k-1,
			\\[1mm]
			\frac{r}{C_4}\cdot\frac{N_{1}-1-j}{N-1} & \text{if }i=1 \text{ and } j' = j+1,
			\\[1mm]
			\frac{r}{C_4}\cdot\frac{N_2-j}{N-1} & \text{if }i=1 \text{ and } k' = k+1,
			\\[1mm]
			1-\sum\limits_{\mathclap{\substack{(i'',j'',k'')\neq \\ (i,j,k)}}}T^{\left(1\right)}_{(i,j,k) \rightarrow (i'',j'',k'')} & \text{if } (i',j',k')=(i, j, k),
			\\[1mm]
			0 & \text{ otherwise,}
		\end{cases}}
		\label{eq:star-bipartite-T1}
	\end{equation}
	\del{where $C_3 = r(j+k)+(N-j-k)$ and $C_4 = r\left(j+k+1\right)+\left(N-j-k-1\right)$.} 
	\del{The first line of Eq.~\eqref{eq:star-bipartite-T1} represents the probability that the type of the hub changes from the resident to mutant. For this event to occur, one of the $j+k$ leaf nodes occupied by the mutant must be chosen as parent, which occurs with probability $r (j+k) / C_3$. Then, because any leaf node is only adjacent to the hub node, the hub node is always selected for death. Therefore, the probability of $i$ changing from $0$ to $1$ is equal to $r(j+k) / C_3$. As another example, consider state $\left(1,j,k\right)$. For the state to change from $\left(1,j,k\right)$ to $ \left(1,j+1,k\right)$, the hub node, which the mutant type currently inhabits, must be selected as parent with probability $r/C_4$. Then, one of the $j$ leaf nodes of the resident type in $V_1$ must be selected for death, which occurs with probability $\left[(N_1- 1) - j\right]/\left(N-1\right)$. The fifth line of Eq.~\eqref{eq:star-bipartite-T1} is equal to the product of these two probabilities. One can similarly derive the other lines of Eq.~\eqref{eq:star-bipartite-T1}.}
	
	\del{The transition probability matrix for the complete bipartite graph is given by}
	\begin{equation}
		\del{T^{\left(2\right)}_{\left(i,j,k\right)\rightarrow\left(i',j',k'\right)} = \begin{cases}
			\frac{r k}{C_3}\cdot\frac{1}{N_1} & \text{if }i=0 \text{ and } i'=1,
			\\[1mm]
			\frac{N_2-k}{C_4}\cdot\frac{1}{N_1} & \text{if }i=1 \text{ and } i'=0,
			\\[1mm]
			\frac{N_2-k}{C_3}\cdot\frac{j}{N_{1}} & \text{if } i=0 \text{ and } j' = j-1,
			\\[1mm]
			\frac{r k}{C_3}\cdot\frac{N_{1}-1-j}{N_{1}} & \text{if } i=0 \text{ and } j' = j +1,
			\\[1mm]
			\frac{N_{1}-j}{C_3}\cdot\frac{k}{N_{2}} & \text{if } i=0 \text{ and } k' = k-1,
			\\[1mm]
			\frac{rj}{C_3}\cdot\frac{N_{2}-k}{N_{2}}  & \text{if } i=0 \text{ and } k' = k+1,
			\\[1mm]
			\frac{N_{2}-k}{C_4}\cdot\frac{j}{N_{1}} & \text{if }i=1 \text{ and } j' = j-1,
			\\[1mm]
			\frac{rk}{C_4}\cdot\frac{N_{1}-1-j}{N_{1}} & \text{if }i=1 \text{ and } j' = j+1,
			\\[1mm]
			\frac{N_1-1-j}{C_4}\cdot \frac{k}{N_2} & \text{if }i=1 \text{ and } k' = k-1,
			\\[1mm]
			\frac{r\left(j+1\right)}{C_4}\cdot\frac{N_{2}-k}{N_2}  & \text{if }i=1 \text{ and } k'=k+1,
			\\[1mm]
			1-\sum\limits_{\mathclap{\substack{(i'',j'',k'')\neq \\ (i,j,k)}}}T^{\left(2\right)}_{\left(i,j,k\right)\rightarrow\left(i'',j'',k''\right)}  & \text{if } (i',j',k')=(i,j,k).
			\\[1mm]
			0 & \text{ otherwise.}
		\end{cases}}
		\label{eq:star-bipartite-T2}		
	\end{equation}
	\del{The first line of Eq.~\eqref{eq:star-bipartite-T2} represents the probability that the type of the hub changes from the resident to mutant. For this event to occur, one of the $k$ mutant nodes in $V_2$ must be selected as parent with probability $rk/C_3$. Then, the hub node must be selected for death with probability $1/N_1$ because each node in $V_2$ is only adjacent to all the $N_1$ nodes in $V_1$. Therefore, the probability of $i$ changing from $0$ to $1$ is equal to $\left(rk/
	C_3\right)\cdot \left(1/N_1\right)$.
	As another example, consider state $\left(1,j,k\right)$, in which there are $j+k+1$ mutants in total. For the state to change from $\left(1,j,k\right)$ to $ \left(1,j+1,k\right) $, one of the $k$ mutant nodes in $V_2$ must first be selected as parent with probability $rk/C_4$. Then, one of the $j$ leaf nodes in $V_1$ of the resident type must be selected for death, which occurs with probability $(N_1-1-j)/N_1$. The eighth line of Eq.~\eqref{eq:star-bipartite-T2} is equal to the product of these two probabilities. One can similarly derive the other lines of Eq.~\eqref{eq:star-bipartite-T2}.}

	In Figs.~\ref{fig:unified_bipartite_plots}(a) and \ref{fig:unified_bipartite_plots}(b), we show the fixation probability and its difference from the case of the Moran process, respectively, for the switching network in which $G_1$ is the star on $N=4$ nodes and $G_2$ is the complete bipartite graph $K_{N_1, N_2} $ with $N_1 = N_2 = 2$. We set $\tau=1,10$, and $50$, and varied $r$. We also show the results for
	$G_1$, $G_2$, and the aggregate network in these figures for comparison.
	We find that $(G_1, G_2, 1)$ is a suppressor \add{of selection}. In contrast, $G_1$ is an amplifier \add{of selection}, and $G_2$ is neutral (i.e., equivalent to the Moran process).
	In fact, no static unweighted network with five nodes or less is a suppressor \add{of selection} \cite{alcalde2017suppressors}. Because the aggregate network is an amplifier \add{of selection}, albeit a weak one, the suppressing effect of $(G_1, G_2, 1)$ owes to the time-varying nature of the switching network.
	Similar to the case in which $G_2$ is the complete graph shown in Fig.~\ref{fig:unified_starcomplete}, $(G_1, G_2, 10)$ and $(G_1, G_2, 50)$ are amplifiers \add{of selection}, and the behavior of $(G_1, G_2, 50)$ is close to that for $G_1$, i.e., the star graph.
	
	In Figs.~\ref{fig:unified_bipartite_plots}(c) and \ref{fig:unified_bipartite_plots}(d), we show the fixation probability and its difference from the case of the Moran process, respectively, for $N_1 = N_2 = 20$. \add{We have set $N = N_1 + N_2 = 40$ as opposed to $N = 50$, which we used for the switching network analyzed in section~\ref{sub:star-complete}, because of the computational cost.} In contrast to the case of $N_1 = N_2 = 2$, the switching network with $N_1 = N_2 = 20$ is an amplifier \add{of selection} for the three values of $\tau$. Furthermore, in contrast to when $N_1 = N_2 = 2$, the fixation probabilities for the switching networks are closer to those for the Moran process than to those for the star graph. \add{To explore the case $N_1 \neq N_2$, we show the results for $\left(N_1,N_2\right)=\left(4,2\right)$ and $\left(N_1,N_2\right)=\left(30,10\right)$ in \hyperref[sec:addd-examples-in-which-g1-is-the-star-graph-g2-is-the-complete-bipartite-graph-and-n1neq-n2]{Appendix D}. The switching network with $\left(N_1,N_2\right)=\left(4,2\right)$ is neither amplifier or suppressor of selection. However, its fixation probabilities are close to those for the Moran process than to those for the star or bipartite complete graph. The switching network with $\left(N_1,N_2\right)=\left(30,10\right)$ is an amplifier of selection and behaves similarly to the switching network with $\left(N_1, N_2\right) = \left( 20, 20 \right)$.}
%
	
	These results for the switching networks with $N=4$ and $N=40$ nodes remain similar for switching networks $(G_2, G_1, \tau)$, as we show in Figs.~\ref{fig:reversed_order_star_com_bipartite}(c) and \ref{fig:reversed_order_star_com_bipartite}(d).
	
	To examine the dependence of the fixation probability on the number of nodes, we show in Fig.~\ref{fig:unified_bipartite_plots}(e) the difference between the fixation probability for the present switching network and that for the Moran process as we vary $N$. We set $\tau=1$ and $N_1=N_2 = N/2 \ge 2$, and compute the fixation probability at $r=0.9$ and $r=1.1$. Figure~\ref{fig:unified_bipartite_plots}(e) indicates that the switching network is a suppressor \add{of selection} only when $N_1=N_2=2$ (i.e., $N=4$) and amplifier \add{of selection} for any larger $N$. When we allow $N_1 \neq N_2$, we found just one additional suppressor \add{of selection} apart from $(N_1, N_2) = (2, 2)$ under the constraints $\tau=1$ and $2\le N_1, N_2 \le 10$, which is $(N_1, N_2) = (3, 2)$ (see Fig.~\ref{fig:suppressors_additional}(b) \add{in \hyperref[sec:adda3-transition-probabilities-for-the-combination-of-the-star-graph-and-the-complete-bipartite-graph]{Appendix B}}). With $\tau = 50$, this switching network is amplifier \add{of selection} for any $N$ (see Fig.~\ref{fig:unified_bipartite_plots}(f)).
	
	\begin{figure}
		\begin{center}
			\includegraphics[width=14cm]{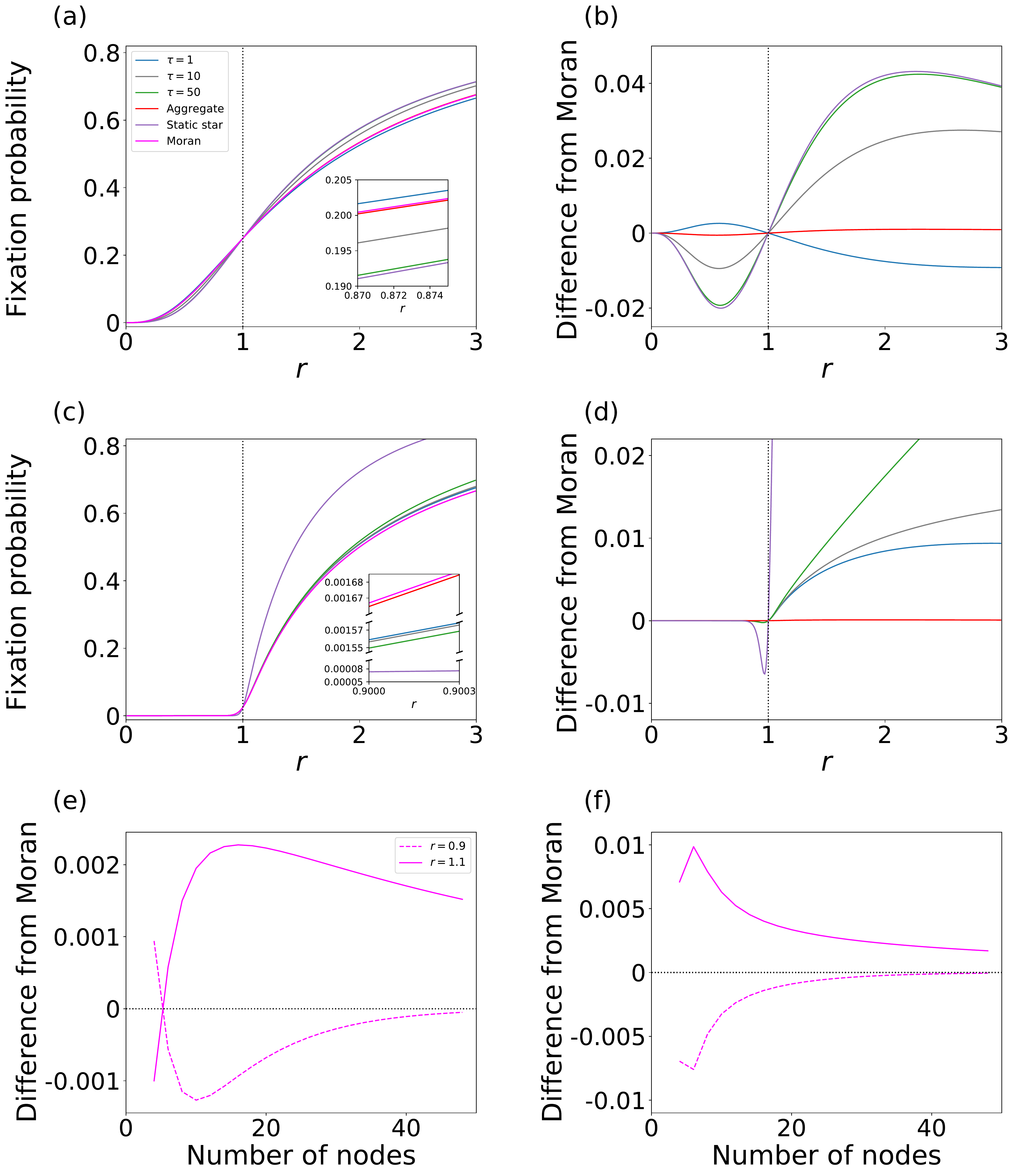}
			\caption{Fixation probability for switching networks in which $G_1$ is the star graph and $G_2$ is the complete bipartite graph. (a) Fixation probability for $N_1=N_2=2$. (b) Difference in the fixation probability from the case of to the Moran process for $N_1=N_2=2$. (c) Fixation probability for $N_1=N_2=20$. (d) Difference in the fixation probability from the case of the Moran process for $N_1=N_2=20$. (e) and (f): Difference in the fixation probability for the switching network relative to the Moran process as a function of $N$ at $r=0.9$ and $1.1$. We set $\tau=1$ in (e) and $\tau=50$ in (f). In (e) and (f), the smallest value of $N$ is four.}
			\label{fig:unified_bipartite_plots}
		\end{center}
	\end{figure}
	
	\subsection{Empirical temporal networks}
	
	\subsubsection{Construction of switching networks}\label{sec:construction-of-switching-networks}
	
	Finally, we numerically simulate the birth-death process on four switching networks informed by empirical temporal network data. We split each of the temporal network data set into two static networks $(V_1, E_1)$ and $(V_2, E_2)$, where $(V_1, E_1)$ contains the first half of the time-stamped edges in terms of the time, $(V_2, E_2)$ containing the second half of the time-stamped edges, $V_1$ and $V_2$ are sets of nodes, and $E_1$ and $E_2$ are sets of edges.
	For simplicity, we regard $(V_1, E_1)$ and $(V_2, E_2)$ as unweighted networks.
	 \add{We note that the purpose of studying these empirical networks is not to examine how fixation occurs in real contact sequences but to explore the generality of the results obtained in the previous sections in asymmetric and large switching networks.}

	For two of the four empirical switching networks, both $V_1$ and $V_2$ contain all nodes. In this case, we switch between $G_1 \equiv (V_1, E_1)$ and $G_2 \equiv (V_2, E_2)$. For the other two empirical switching networks, either $V_1$ or $V_2$ misses some nodes in the original temporal network. In this case, we construct switching networks in the following two manners. With the first method, we only use the nodes in $V_1 \cap V_2$ and the edges that exist between pairs of nodes in $V_1 \cap V_2$ as $G_1$ and $G_2$. For each of the two empirical data sets for which $V_1$ or $V_2$ misses some nodes, we have confirmed that
	the first and second halves of the static networks induced on $V_1\cap V_2$ created with this method are connected networks.
	With the second method, we use all nodes for both $G_1$ and $G_2$. In other words, we set $G_1 = (V_1 \cup V_2, E_1)$ and $G_2 = (V_1 \cup V_2, E_2)$. Therefore, if $v \in V_1$ and $v \notin V_2$, for example, then $v$ is an isolated node in $G_2$. Except with special initial conditions, the fixation of either type never occurs in a static network with isolated nodes. However, the fixation does occur in the switching network if the aggregate network is connected, which we have confirmed to be the case for all our empirical data sets.
		
	\subsubsection{Simulation procedure}
	
	As the initial condition, we place a mutant on one node selected uniformly at random and all the other $N-1$ nodes 
	are of the resident type. Then, we run the birth-death process until all nodes were of the same type. We carried out $2 \times 10^5$ such runs in parallel on $56$ cores, giving us a total of $112 \times 10^5$ runs, for each network and each value of $r$. We numerically calculated the fixation probability as the fraction of runs in which the mutant fixates. We simulated the switching networks with $\tau \in \{1, 10, 50\}$ and $r \in \{0.7,0.8,0.9,1,1.1,1.2,1.3,1.4,1.5,1.6,1.7\}$ for all the networks except the hospital network of $75$ nodes. For the hospital network, we omitted $r=1.6$ and $1.7$ due to high computational cost.
	
	\subsubsection{Data}
	
	The ants' colony data, which we abbreviate as ant \cite{ants2015quevillonsocial}, has 39 nodes and 330 time-stamped edges. Each node represents an ant in a colony. An edge represents a trophallaxis event, which was recorded when the two ants were engaged in mandible-to-mandible contact for greater than one second. The first and second halves of the data have 34 nodes each.
	
	The second data is the contacts between members of five households in the Matsangoni sub-location within the Kilifi Health and Demographic Surveillance Site (KHDSS) in coastal Kenya~\cite{kilifi2016quantifying}. A household was defined as the group of individuals who ate from the same kitchen~\cite{kilifi2016quantifying}. Each participant in the study had a wearable sensor that detected the presence of another sensor within approximately 1.5 meters. Each node is an individual in a household. An edge represents a time-stamped contact between two individuals. There were 47 nodes. There were 219 time-stamped edges representing contacts between pairs of individuals in different households and $32,426$ time-stamped edges between individuals of the same households. Both the first and second halves contain all the 47 nodes and are connected networks as static network owing to the relatively large number of time-stamped edges.
	
	The third data is a mammalian temporal network based on interaction between raccoons~\cite{reynolds2015raccoon}. A node represents a wild raccoon. The time-stamped events were recorded whenever two raccoons came within approximately 1 to 1.5 meters for more than one second, using proximity logging collars that were placed on raccoons. The recording was made in Ned Brown Forest Preserve in suburban Cook County, Illinois, USA, from July 2004 to July 2005. There are 24 nodes and $2,000$ time-stamped edges. Both the first and second halves of the data contain all the 24 nodes and are connected networks as static network.
	
	The fourth data is a contact network in a hospital~\cite{hospitalvanhems2013estimating}. The data were recorded in a geriatric unit of a university hospital in Lyon, France, from December 6, 2010 at 1 pm to December 10, 2010 at 2 pm. The unit contained 19 out of the $1,000$ beds in the hospital. During the recording period, 50 professionals worked in the unit, and 31 patients were admitted. Fourty-six among the 50 professionals and 29 among the 31 patients participated in the study. Therefore, the network had 75 nodes in total. The professionals comprised of 27 nurses or nurses’ aides, 11 medical doctors, and 8
	administrative staff members. An edge represents a time-stamped contact between two individuals; there are $32,424$ time-stamped edges. The first and second halves of the data contain 50 nodes each.
	
	We obtained the ant, raccoon, and hospital data from \url{https://networkrepository.com/}~\cite{nr-aaai15}. We obtained the Kilifi data from \url{http://www.sociopatterns.org/}.
	
	\subsubsection{Numerical results}
	
	We investigate the fixation probability on the switching networks with $\tau=1$, $10$, and $50$, static networks $G_1$ and $G_2$, and the aggregate network. We remind that the aggregate network is a static weighted network, whereas $G_1$ and $G_2$ are unweighted networks. For the ant and hospital data,
	the switching networks constructed with the second method are different from those constructed with the first method. For these two data sets, fixation does not occur on $G_1$ and $G_2$ because they miss some nodes. Therefore, we do not analyze the fixation probability on $G_1$ and $G_2$ for these data sets.
	
	We show in Figs.~\ref{fig:empirical_figures}(a) and \ref{fig:empirical_figures}(b) the fixation probability on the ant switching networks constructed with the first and second methods, respectively. Because we are interested in whether the switching networks are amplifiers or suppressors \add{of selection}, we only show the difference between the fixation probability on the given network and that for the Moran process in Fig.~\ref{fig:empirical_figures}. Figure~\ref{fig:empirical_figures}(a) indicates that the switching networks are amplifiers \add{of selection} but less amplifying than each of its constituent static networks, $G_1$ and $G_2$. Another observation is that the fixation probability on the static aggregate network is close to that on the switching networks. In this sense,
	the switching networks do not yield surprising results. The switching networks are more strongly amplifying when $\tau$ is larger. Moreover, the fixation probability on the switching network is closer to that on $G_1$ when $\tau$ is larger. This result is expected because the evolutionary dynamics is the same between the switching networks and $G_1$ in the first $\tau$ time steps.
	For the switching networks constructed with the second method, Fig.~\ref{fig:empirical_figures}(b) shows that the switching networks are amplifiers and more amplifying than the static aggregate network. This result is qualitatively different from that for the switching networks constructed with the first method shown in Fig.~\ref{fig:empirical_figures}(a).
	
	We show the results for the Kilifi networks in Fig.~\ref{fig:empirical_figures}(c). Because the first and second methods yield the same $G_1$ and $G_2$ for the Kilifi data, we only present the results for the first method for this data set and also for the next one (i.e., racoon networks). The figure indicates that the switching networks are amplifiers but less amplifying than $G_1$ and $G_2$ and similarly amplifying compared to the aggregate network. These results are similar to those for the ant networks shown in Fig.~\ref{fig:empirical_figures}(a).
	
	We show the results for the raccoon networks in Fig.~\ref{fig:empirical_figures}(d). We find that the switching networks are amplifiers but less amplifying than $G_1$ and $G_2$, similar to the case of the ant and Kilifi networks. We also find that the switching networks are more amplifying than the aggregate network.
	
	We show the results for the hospital switching networks in Figs.~\ref{fig:empirical_figures}(e) and \ref{fig:empirical_figures}(f). The results for the switching networks constructed with the first method (see  Fig.~\ref{fig:empirical_figures}(e)) are
	similar to those for the raccoon networks shown in Fig.~\ref{fig:empirical_figures}(d).
	The switching networks constructed with the second method (see Fig.~\ref{fig:empirical_figures}(f)) are more amplifying than the aggregate network, similar to the case of the ant networks generated by the same method (see Fig.~\ref{fig:empirical_figures}(b)).
	
	In sum, for these empirical temporal networks, we did not find a surprising result that the fixation probability for the switching networks is not an interpolation of those for the two static networks $G_1$ and $G_2$. However, the fixation probability for the empirical switching networks depends on the $\tau$ value and deviates from the prediction from the aggregate network in multiple ways.

	
	\begin{figure}
		\begin{center}
			\includegraphics[width=13cm]{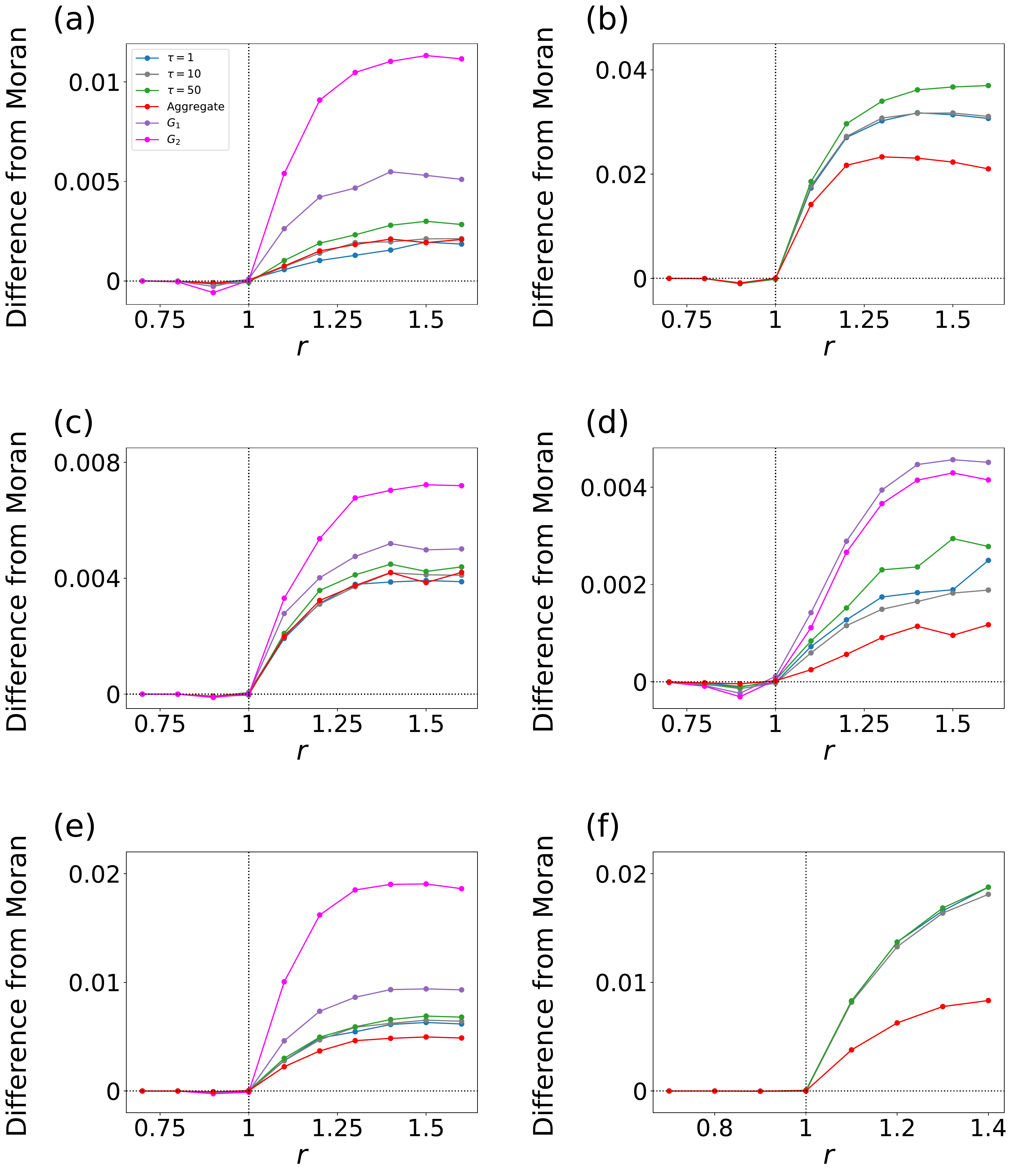}
			
			\caption{Fixation probability on empirical switching networks. In each panel, we show the difference in the fixation probability from the case of the Moran process as a function of $r$. (a) Ant networks constructed with the first method. (b) Ant networks constructed with the second method. (c) Kilifi switching networks. (d) Raccoon networks. (e) Hospital networks constructed with the first method. (f) Hospital networks constructed with the second method. We compared the fixation probability on switching networks with $\tau \in \{1, 10, 50\}$, $G_1$, $G_2$, and the aggregate network in each panel.}
			\label{fig:empirical_figures}
		\end{center}
		
	\end{figure}
	
	\section{Discussion}
	
	We have shown that, under the birth-death updating rule and uniform initialization, a majority of the switching networks on six nodes are suppressors of \del{natural} selection.
	This result contrasts with the case of static networks, for which there exists only one suppressor \add{of selection} on six nodes~\cite{alcalde2017suppressors}.
	We also found that switching networks alternating between the star graph and the complete graph and those alternating between the star graph and the complete bipartite graph are suppressors \add{of selection} when the number of nodes, $N$, is small. When $N$ is larger, the same switching networks are amplifiers \add{of selection} but less amplifying than the star graph. Among the empirical networks that we analyzed, we did not find any suppressors. However, these switching networks were notably less amplifying than the constituent static networks $G_1$ and $G_2$. In fact, the less amplifying nature of switching networks is largely explained by the aggregate weighted network, or the static network obtained by the superposition of $G_1$ and $G_2$. Therefore, our results for the empirical switching networks are not surprising.
	The result that the switching network composed of two amplifying static networks can be suppressor is our main finding. Because all the instances that we have found are small networks, searching suppressing switching networks with larger $N$ including systematically constructing such instances remains future work.
	
	\add{Our choices of the larger networks are primarily driven by computational feasibility. The complete graph, star graph, and complete bipartite graph are convenient families of networks owing to their highly symmetric nature, which drastically reduces the number of the unknowns to be determined. Similarly, all the empirical networks that we used had at most 75 nodes due to computational cost. Additionally, we avoided disconnected and sparse networks because fixation requires a network to be connected, and splitting a sparse network into two networks often resulted in disconnected components. Nevertheless, by studying small networks, larger symmetric networks, and the empirical examples, we tried to provide a broader picture of the evolutionary dynamics on switching networks. However, there remains ample room for future work. For instance, faster algorithms for approximate computation for larger switching networks, such as those assuming weak selection~\cite{allen2021fixation}, remain to be explored. We could also attempt to reduce simulation times. In \cite{chatterjee2017faster}, instead of sampling every time step of the evolutionary dynamics, only `effective' steps are sampled. Effective steps are defined as those in which the network state changes. The steps in which a resident replaces a resident or a mutant replaces a mutant are deemed as ineffective steps, which one does not sample in their algorithm, hence accelerating the overall simulation time.} Additionally, we studied switching networks with only two snapshots. It is straightforward to extend the present computational framework to the case of switching networks with more than two snapshots. Last, many temporal network data are provided as a list of time-stamped events between pairs of nodes. Evolutionary dynamics driven by such event-based temporal network data is also worth studying.
	
	We considered exogenous changes of the network over time in this study. Another opportunity of research is to assume that the change of the network structure over time is driven by the state of the system, which is referred to as adaptive networks \cite{gross2008adaptive, sayama2013modeling}. The recent modeling framework inspired by biological examples in which the residents and mutants use different static networks defined on the same node set \cite{tkadlec2021natural,melissourgos2022extension} can be interpreted as an example of fixation dynamics on adaptive networks. Allowing nodes to stochastically sever and create edges they own as the node's type flips from the resident to mutant and vice versa may lead to new phenomena in fixation dynamics. Such models have been extensively studied for evolutionary games on dynamic networks~\cite{santos2006cooperation,pacheco2006coevolution,fu2009partner}. 
	
	We recently found that most hypergraphs are suppressors \add{of selection} under the combination of a birth-death process and uniform initialization, which are the conditions under which most of conventional networks are amplifiers \add{of selection}~\cite{Ruodan23}. It has been longer known that most undirected networks are suppressors \add{of selection} under the death-birth process~\cite{hindersin2015most}\add{,} and in directed networks under various imitation rules including birth-death processes \cite{masuda2009directionality}. 
	The degree of amplification and suppression also depends on the initialization~\cite{adlam2015amplifiers,pavlogiannis2018construction}. For example,
	non-uniform initializations can make the star, which is a strong amplifier \add{of selection} under the birth-death process and uniform initialization, a suppressor \add{of selection} \cite{adlam2015amplifiers}.
	Furthermore, it has been shown that the amplifiers \add{of selection} are transient and bounded~\cite{tkadlec2020limits}. 
	Our results suggest that small temporal networks are another major case in which suppressors \add{of selection} are common. These results altogether encourage us to explore different variants of network models and evolutionary processes to clarify how common amplifiers \add{of selection} are. This task warrants future research.
	
	
	
	
	\section*{Funding}\label{sec:funding}
	N.M. acknowledges support from AFOSR European Office (under Grant No. FA9550-19-1-7024), the Japan Science and Technology Agency (JST) Moonshot R$\&$D (under Grant No. JPMJMS2021), and the National Science Foundation (under Grant No. 2052720 and 2204936).

	\section*{Appendices}\label{appendix}
	
	\subsection*{A. Switching networks \del{in which the first network is the star graph} \add{with the order of the static networks reversed and with random initialization time.}}\label{appendix:A1}
	
\add{In this section, we consider switching networks in which $G_2$ rather than $G_1$ appear first and those with uniformly random initialization time.}
	
\add{In Fig.~\ref{fig:reversed_order_star_com_bipartite}(a), we show the results for the six-node switching networks in which $G_1$ and $G_2$ are given in Fig.~\ref{fig:ampsbecomesup}(a). We find that $\left(G_2,G_1,1\right)$ and the switching network with $\tau=1$ and the random initialization time are both suppressors of selection. These variants of switching networks are amplifiers of selection when $\tau=10$ and $\tau=50$. These results are qualitatively the same as those for $\left(G_1,G_2, \tau \right)$.}
	
\add{Next,} \del{In this section,} we consider switching networks $(\del{G_1} \add{G_2}, \del{G_2} \add{G_1}, \tau)$ in which $G_1$ is the \add{star} \del{complete} graph and $G_2$ is the \del{star} \add{complete} graph. We show the difference in the fixation probability from the case of the Moran process for the switching networks with $N=4$ and $N=50$ in Figs.~\ref{fig:reversed_order_star_com_bipartite}(\del{a}\add{b}) and \ref{fig:reversed_order_star_com_bipartite}(\del{b}\add{c}), respectively. With $N=4$, we find that $(\del{G_1} \add{G_2}, \del{G_2} \add{G_1}, 10)$ and $(\del{G_1} \add{G_2}, \del{G_2} \add{G_1}, 50)$ are amplifiers \add{of selection}  and that
	$(\del{G_1} \add{G_2}, \del{G_2} \add{G_1}, 1)$ is a suppressor \add{of selection} (see Fig.~\ref{fig:reversed_order_star_com_bipartite}(\del{a}\add{b}).
	The aggregate network is a weak suppressor \add{of selection}. With $N=50$, we find that $(\del{G_1} \add{G_2}, \del{G_2} \add{G_1}, \tau)$ for all the three $\tau$ values (i.e., $\tau \in \{1, 10, 50\}$) are amplifiers \add{of selection} and that the aggregate network is a weak suppressor \add{of selection} (see Fig.~\ref{fig:reversed_order_star_com_bipartite}(\del{b}\add{c})). These results are qualitatively the same as those for \add{$(G_1, G_2, \tau)$} \del{the switching networks in which the order of $G_1$ and $G_2$ is the opposite,} shown in Fig.~\ref{fig:unified_starcomplete}. A main difference is that, when $\tau=50$, the fixation probability is reasonably close to that for the Moran process in the case of the present switching network because \add{the initially used static network, i.e.,} \del{$G_1$} \add{$G_2$,} is a regular graph and therefore equivalent to the Moran process. In contrast, in Fig.~\ref{fig:unified_starcomplete}, the switching network is much more amplifying because \del{$G_1$} \add{the initially used static network} is the star graph, which is a strong amplifier \add{of selection}. \add{As expected, the results in the case of the random initialization time are between those for $(G_1, G_2, \tau)$ and those for $(G_2, G_1, \tau)$.}
	
	We show in Figs.~\ref{fig:reversed_order_star_com_bipartite}(\del{c}\add{d}) and \ref{fig:reversed_order_star_com_bipartite}(\del{d}\add{e}) the results for \add{$(G_2, G_1, \tau)$} \del{the switching networks} with $N = 4$ and $N = 40$, respectively, in which $G_1$ is the \del{complete bipartite} \add{star} graph and $G_2$ is the \del{star} \add{complete bipartite} graph. 	With $N=4$, we find that $(\del{G_1} \add{G_1}, \del{G_2} \add{G_1}, 1)$ is a suppressor \add{of selection}, $(\del{G_1} \add{G_2}, \del{G_2} \add{G_1}, 10)$ and $(\del{G_1} \add{G_2}, \del{G_2} \add{G_1}, 50)$ are amplifiers \add{of selection} \del{$\tau=50$}, and the aggregate network is a weak amplifier \add{of selection} (see Fig.~\ref{fig:reversed_order_star_com_bipartite}(\del{c}\add{d})).
	With $N=40$, we find that $(\del{G_1} \add{G_2}, \del{G_2} \add{G_1}, \tau)$ with $\tau \in \{1, 10, 50\}$ is an amplifier \add{of selection} and that the aggregate network is a weak amplifier \add{of selection} (see Fig.~\ref{fig:reversed_order_star_com_bipartite}(\del{d}\add{e})). These results are similar to those for \add{$(G_1, G_2, \tau)$} \del{the switching networks in which the order of $G_1$ and $G_2$ is the opposite,} shown in Figs.~\ref{fig:unified_bipartite_plots}(a) and \ref{fig:unified_bipartite_plots}(b). 
	Similar to Figs.~\ref{fig:reversed_order_star_com_bipartite}(\del{a}\add{b}) and \ref{fig:reversed_order_star_com_bipartite}(\del{b}\add{c}), with $\tau=50$, the present switching networks are close in behavior to the Moran process because \add{the initially used static network, i.e., $G_2$,} \del{$G_1$} is a regular network. This result contrasts to the corresponding result for \add{$(G_1, G_2, 50)$} \del{the order-swapped switching network with $\tau=50$}, which is a relatively strong amplifier \add{of selection} because \add{the initially used static network} \del{$G_1$} is the star graph (see Figs.~\ref{fig:unified_bipartite_plots}(a) and \ref{fig:unified_bipartite_plots}(b)).
\add{Again, the results in the case of the random initialization time are between those for $(G_1, G_2, \tau)$ and those for $(G_2, G_1, \tau)$.}
	
	
	\add{\begin{figure}[!htbp]
		\begin{center}
			\includegraphics[width=13cm]{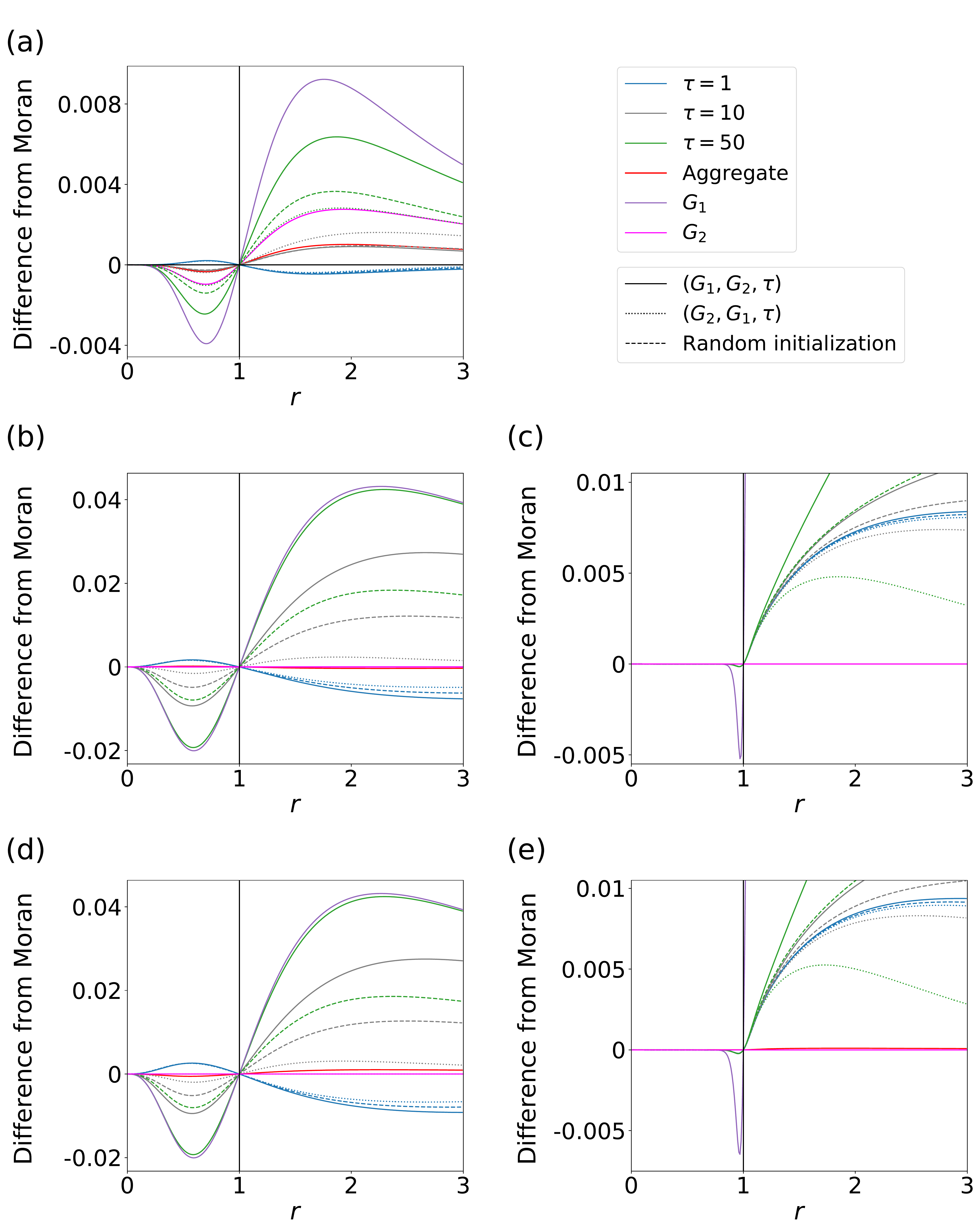}
			\caption{Fixation probability for switching networks \del{in which $G_2$ is the star graph} \add{with the order of $G_1$ and $G_2$ being swapped and with random initialization time}. In each panel, we show the difference in the fixation probability from the case of the Moran process. \del{(a) $N=4$ and the complete graph as $G_1$. (b) $N=50$ and the complete graph as $G_1$.
				(c) $N_1 = N_2 = 2$ and the complete bipartite graph as $G_1$. (d) $N_1 = N_2 = 20$ and the complete bipartite graph as $G_1$. We do not show the result for $G_1$ because both the complete graph and the complete bipartite graph are equivalent to the Moran process such that the difference from the Moran process is $0$ for any $r$.}
\add{\textbf{(a)} $G_1$ and $G_2$ given in Fig.~\ref{fig:ampsbecomesup}(a). In (b) and (c), $G_1$ is the star graph, and $G_2$ is the complete graph. (b) $N=4$. (c) $N=50$. In (d) and (e), $G_1$ is the star graph, and $G_2$ is the complete bipartite graph. (d) $N_1 = N_2 = 2$. (e) $N_1 = N_2 = 20$.
In (a), the results for all the three switching networks with $\tau=1$, shown by the blue lines, heavily overlap with each other, and those for $(G_1, G_2, 10)$, shown by the gray solid line, and those for the random initialization with $\tau=10$, shown by the gray dashed line, heavily overlap on top of each other. In (b)--(e), $G_2$ is equivalent to the Moran process. Therefore, the results for $G_2$, shown by the magenta solid line, completely overlap with the horizontal axis. Similarly, in (b) and (d), the results for $(G_2,G_1,50)$, shown by the green dotted lines, heavily overlap with the horizontal axis and are almost hidden behind the magenta solid lines. In addition, in (b), (c), and (e), the results for the aggregate network, shown by the red solid lines, almost completely overlap with or are very close to the horizontal axis.}
	}
			\label{fig:reversed_order_star_com_bipartite}
		\end{center}
	\end{figure}}

	\subsection*{B. Further examples of small  switching networks in which $G_1$ is the star graph}\label{sec:a2-further-examples-of-small-amplifying-switching-networks-in-which-g1-is-the-star-graph}
	
	In Fig.~\ref{fig:suppressors_additional}(a), we show the difference in the fixation probability from the case of the Moran process for the switching networks in which $G_1$ is the star graph and $G_2$ is the complete graph on $N=3$ nodes. We also plot the results for $G_1$, $G_2$, and the aggregate network. 
	It is known that $G_1$ is an amplifier \add{of selection} \cite{lieberman2005evolutionary} and that $G_2$ is equivalent to the Moran process.
	In contrast, the switching network with $\tau=1$ and the aggregate network are suppressors \add{of selection}.
	The aggregate network is much less suppressing than the switching network. The switching networks with $\tau \in \{10, 50\}$ are amplifiers \add{of selection}. 
	
	In Fig.~\ref{fig:suppressors_additional}(b), we show the results for the switching networks in which $G_1$ is the star graph and $G_2$ is the complete bipartite graph, $K_{\left(3,2\right)}$, on $N=5$ nodes. Note that both $G_1$ (i.e., star) \cite{lieberman2005evolutionary} and $G_2$ (i.e., complete bipartite graph $K_{\left(3,2\right)}$)~\cite{monk2014martingales} are amplifiers \add{of selection}. In contrast, as in Fig.~\ref{fig:suppressors_additional}(a), the switching network with $\tau=1$ (but not with $\tau \in \{10, 50 \}$) and the aggregate network are suppressors \add{of selection}, and the aggregate network is only weakly suppressing.
	
	
	\begin{figure}[t] 
		\begin{center}
			\includegraphics[width=13cm]{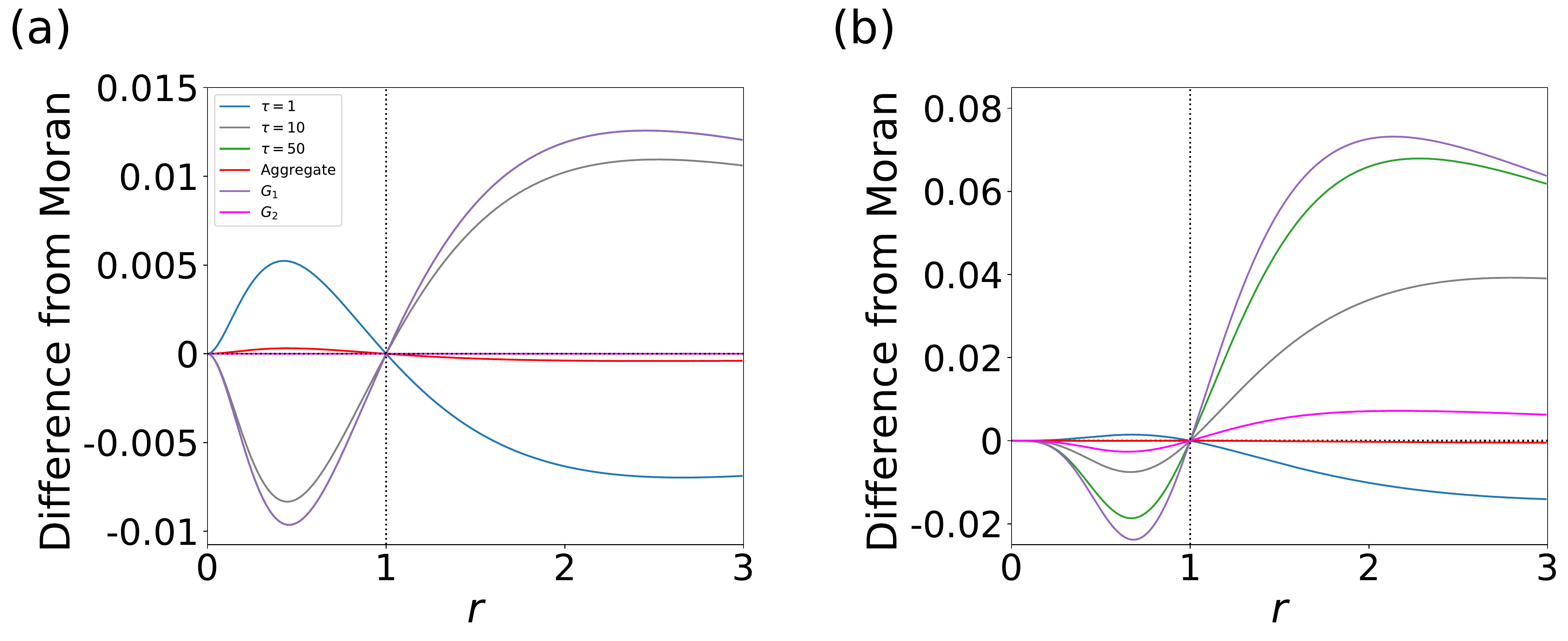}
			\caption{Fixation probability as a function of $r$ for two small switching networks. (a) Switching network with $N=3$ in which $G_1$ is the star graph and $G_2$ is the complete graph. (b) Switching network with $N=5$ in which $G_1$ is the star graph and $G_2$ is the complete bipartite graph $K_{\left(3,2\right)}$. \add{In (a), because $G_2$ is the complete graph, its plot is exactly on the horizontal axis.}}
			\label{fig:suppressors_additional}
		\end{center}
	\end{figure}
	\subsection*{\add{C. Transition probabilities for the combination of the star graph and the complete bipartite graph}}\label{sec:adda3-transition-probabilities-for-the-combination-of-the-star-graph-and-the-complete-bipartite-graph}
		\add{The transition probability matrix for the star graph is given by}
\add{\begin{linenomath}\begin{equation}
		T^{\left(1\right)}_{\left(i,j,k\right)\rightarrow\left(i',j',k'\right)} = \begin{cases}
			\frac{r(j+k)}{C_3} & \text{if } i=0 \text{ and } i'=1,
			\\[1mm]
			\frac{N-1-j-k}{C_4} & \text{if } i = 1 \text{ and } i'=0,
			\\[1mm]
			\frac{1}{C_3}\cdot\frac{j}{N-1} & \text{if } i=0 \text{ and }j'= j-1,
			\\[1mm]
			\frac{1}{C_3}\cdot\frac{k}{N-1} & \text{if }i=0 \text{ and }k' = k-1,
			\\[1mm]
			\frac{r}{C_4}\cdot\frac{N_{1}-1-j}{N-1} & \text{if }i=1 \text{ and } j' = j+1,
			\\[1mm]
			\frac{r}{C_4}\cdot\frac{N_2-j}{N-1} & \text{if }i=1 \text{ and } k' = k+1,
			\\[1mm]
			1-\sum\limits_{\mathclap{\substack{(i'',j'',k'')\neq \\ (i,j,k)}}}T^{\left(1\right)}_{(i,j,k) \rightarrow (i'',j'',k'')} & \text{if } (i',j',k')=(i, j, k),
			\\[1mm]
			0 & \text{ otherwise,}
		\end{cases}
		\label{eq:star-bipartite-T1}
	\end{equation}\end{linenomath}}
\add{where}\begin{linenomath}
\begin{equation}
\add{C_3 = r(j+k)+(N-j-k)}
\end{equation}\end{linenomath}
and
\begin{linenomath}
\begin{equation}
\add{C_4 = r\left(j+k+1\right)+\left(N-j-k-1\right).}
\end{equation}\end{linenomath}
\add{The first line of Eq.~\eqref{eq:star-bipartite-T1} represents the probability that the type of the hub changes from the resident to mutant. For this event to occur, one of the $j+k$ leaf nodes occupied by the mutant must be chosen as parent, which occurs with probability $r (j+k) / C_3$. Then, because any leaf node is only adjacent to the hub node, the hub node is always selected for death. Therefore, the probability of $i$ changing from $0$ to $1$ is equal to $r(j+k) / C_3$. As another example, consider state $\left(1,j,k\right)$. For the state to change from $\left(1,j,k\right)$ to $ \left(1,j+1,k\right)$, the hub node, which the mutant type currently inhabits, must be selected as parent with probability $r/C_4$. Then, one of the $j$ leaf nodes of the resident type in $V_1$ must be selected for death, which occurs with probability $\left[(N_1- 1) - j\right]/\left(N-1\right)$. The fifth line of Eq.~\eqref{eq:star-bipartite-T1} is equal to the product of these two probabilities. One can similarly derive the other lines of Eq.~\eqref{eq:star-bipartite-T1}.}
	
	\add{The transition probability matrix for the complete bipartite graph is given by}
	\add{\begin{linenomath}\begin{equation}
		T^{\left(2\right)}_{\left(i,j,k\right)\rightarrow\left(i',j',k'\right)} = \begin{cases}
			\frac{r k}{C_3}\cdot\frac{1}{N_1} & \text{if }i=0 \text{ and } i'=1,
			\\[1mm]
			\frac{N_2-k}{C_4}\cdot\frac{1}{N_1} & \text{if }i=1 \text{ and } i'=0,
			\\[1mm]
			\frac{N_2-k}{C_3}\cdot\frac{j}{N_{1}} & \text{if } i=0 \text{ and } j' = j-1,
			\\[1mm]
			\frac{r k}{C_3}\cdot\frac{N_{1}-1-j}{N_{1}} & \text{if } i=0 \text{ and } j' = j +1,
			\\[1mm]
			\frac{N_{1}-j}{C_3}\cdot\frac{k}{N_{2}} & \text{if } i=0 \text{ and } k' = k-1,
			\\[1mm]
			\frac{rj}{C_3}\cdot\frac{N_{2}-k}{N_{2}}  & \text{if } i=0 \text{ and } k' = k+1,
			\\[1mm]
			\frac{N_{2}-k}{C_4}\cdot\frac{j}{N_{1}} & \text{if }i=1 \text{ and } j' = j-1,
			\\[1mm]
			\frac{rk}{C_4}\cdot\frac{N_{1}-1-j}{N_{1}} & \text{if }i=1 \text{ and } j' = j+1,
			\\[1mm]
			\frac{N_1-1-j}{C_4}\cdot \frac{k}{N_2} & \text{if }i=1 \text{ and } k' = k-1,
			\\[1mm]
			\frac{r\left(j+1\right)}{C_4}\cdot\frac{N_{2}-k}{N_2}  & \text{if }i=1 \text{ and } k'=k+1,
			\\[1mm]
			1-\sum\limits_{\mathclap{\substack{(i'',j'',k'')\neq \\ (i,j,k)}}}T^{\left(2\right)}_{\left(i,j,k\right)\rightarrow\left(i'',j'',k''\right)}  & \text{if } (i',j',k')=(i,j,k).
			\\[1mm]
			0 & \text{ otherwise.}
		\end{cases}
		\label{eq:star-bipartite-T2}		
	\end{equation}\end{linenomath}}
	\add{The first line of Eq.~\eqref{eq:star-bipartite-T2} represents the probability that the type of the hub changes from the resident to mutant. For this event to occur, one of the $k$ mutant nodes in $V_2$ must be selected as parent with probability $rk/C_3$. Then, the hub node must be selected for death with probability $1/N_1$ because each node in $V_2$ is only adjacent to all the $N_1$ nodes in $V_1$. Therefore, the probability of $i$ changing from $0$ to $1$ is equal to $\left(rk/
	C_3\right)\cdot \left(1/N_1\right)$.
	As another example, consider state $\left(1,j,k\right)$, in which there are $j+k+1$ mutants in total. For the state to change from $\left(1,j,k\right)$ to $ \left(1,j+1,k\right) $, 
	one of the $k$ mutant nodes in $V_2$ must first be selected as parent with probability $rk/C_4$. Then, one of the $j$ leaf nodes in $V_1$ of the resident type must be selected for death, which occurs with probability $(N_1-1-j)/N_1$. The eighth line of Eq.~\eqref{eq:star-bipartite-T2} is equal to the product of these two probabilities. One can similarly derive the other lines of Eq.~\eqref{eq:star-bipartite-T2}.}
	
	\subsection*{\add{D. Examples in which $G_1$ is the star graph, $G_2$ is the complete bipartite graph, and $N_1\neq N_2$}}\label{sec:addd-examples-in-which-g1-is-the-star-graph-g2-is-the-complete-bipartite-graph-and-n1neq-n2}
	
	\add{In this section we consider switching networks when $G_1$ is the star graph and $G_2$ is the complete bipartite graph. In Fig.~\ref{fig:unified_bipartite_plots}, we have shown the results for $N_1=N_2$. The complete bipartite graph $K_{\left(N_1,N_2\right)}$ is isothermal when $N_1=N_2$, whereas it is an amplifier \add{of selection} when $N_1\neq N_2$~\cite{monk2014martingales}. Therefore, the fixation probability for $(G_1, G_2, \tau)$ may be qualitatively different between the cases $N_1 = N_2$ and $N_1 \neq N_2$. In this section, we examine two switching networks when $N_1\neq N_2$.}
	
	\add{In Fig.~\ref{fig:star_bipartite_unsymmetric}(a), we show the difference in the fixation probability from the case of the Moran process for the switching networks in which $G_1$ is the star graph on $N=6$ nodes and $G_2$ is the complete bipartite graph $K_{\left(4,2\right)}$. We also plot the fixation probability for $G_1$, $G_2$, and the aggregate network. \del{As stated in the previous sections,} \add{Although} $G_1$ and $G_2$ are both amplifiers \add{of selection} \cite{lieberman2005evolutionary,monk2014martingales}, \del{.
	In contrast} \del{the switching network with $\tau=1$} \add{$(G_1, G_2, 1)$} is neither an amplifier nor a suppressor \add{of selection}. It transitions approximately at $r=2.062$ as $r$ increases, similar to static networks analyzed before \cite{alcalde2018evolutionary}.
%
%
\add{Specifically, $(G_1, G_2, 1)$ is amplifying} when $r$ is approximately smaller than $2.062$ and suppressing when $r$ is approximately greater than $2.062$.  The switching networks with $\tau \in \{10, 50\}$ as well as the aggregate network are amplifiers of selection.}
	
	\add{In Fig.~\ref{fig:star_bipartite_unsymmetric}(b), we show the results for $(G_1, G_2, \tau)$ in which $G_1$ is the star graph on $N=40$ nodes and $G_2$ is the complete bipartite graph $K_{\left(30,10\right)}$. We note that both $G_1$ and $G_2$ are amplifiers of selection. In this case, all the switching networks and the static networks are amplifiers of selection, which is qualitatively the same result as that for $N_1 = N_2 = 20$ (see Figs.~\ref{fig:unified_bipartite_plots}(c) and (d)).}
	
	\add{\begin{figure}[ht] 
		\begin{center}
			\includegraphics[width=13cm]{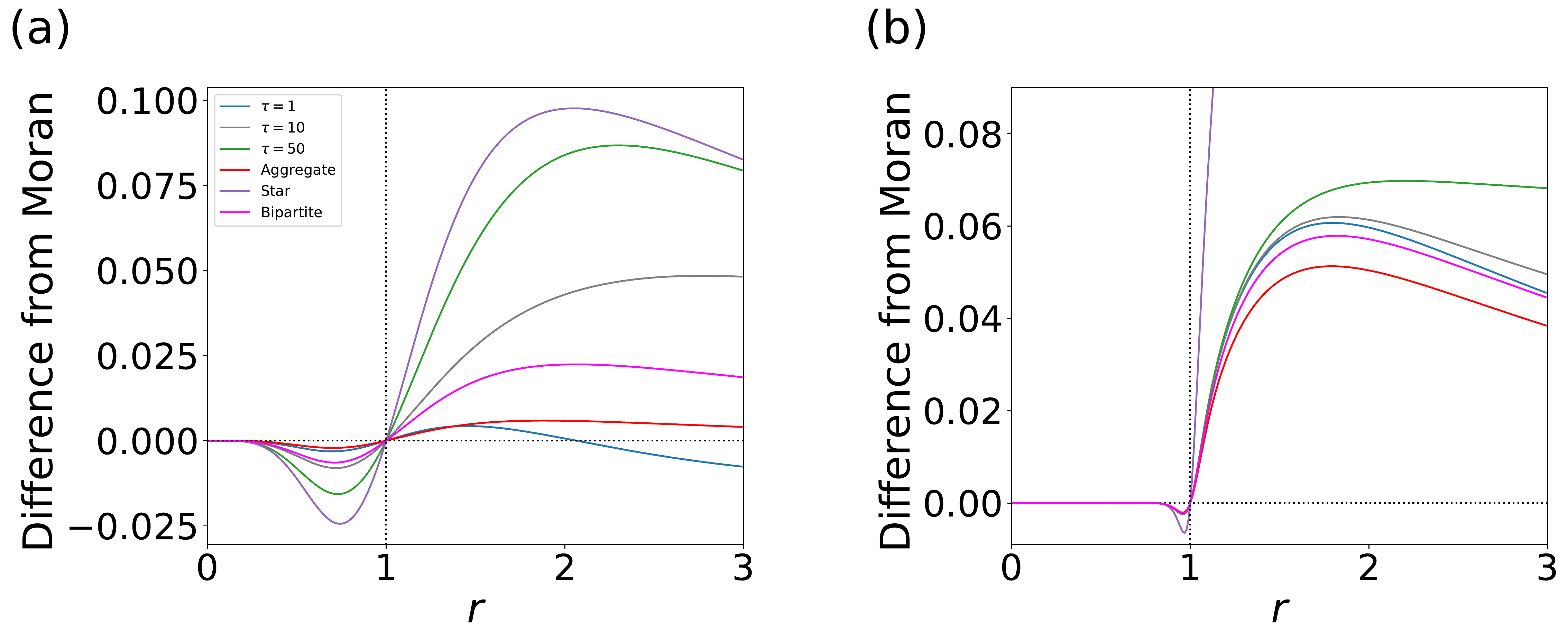}
			\caption{\add{Fixation probability as a function of $r$ when $G_1$ is the star graph, $G_2$ is the complete graph, and $N_1 \neq N_2$. (a) $\left(N_1,N_2\right)=\left(4,2\right)$. (b) $\left(N_1,N_2\right)=$$\left(30,10\right)$.}}
			\label{fig:star_bipartite_unsymmetric}
		\end{center}
	\end{figure}}

	

\end{document}